\documentclass{article}

\usepackage{amsmath}

\usepackage{times}
\usepackage{bm}
\usepackage{natbib}


\usepackage[plain,noend]{algorithm2e}

\makeatletter
\renewcommand{\algocf@captiontext}[2]{#1\algocf@typo. \AlCapFnt{}#2} 
\def\@algocf@capt@plain{top}
\renewcommand{\algocf@makecaption}[2]{%
  \addtolength{\hsize}{\algomargin}%
  \sbox\@tempboxa{\algocf@captiontext{#1}{#2}}%
  \ifdim\wd\@tempboxa >\hsize
    \hskip .5\algomargin%
    \parbox[t]{\hsize}{\algocf@captiontext{#1}{#2}}
  \else%
    \global\@minipagefalse%
    \hbox to\hsize{\box\@tempboxa}
  \fi%
  \addtolength{\hsize}{-\algomargin}%
}
\makeatother

\def\Bka{{\it Biometrika}}

\addtolength\topmargin{35pt}

\usepackage{graphicx}
\usepackage{url} 
\usepackage{verbatim}
\usepackage{multirow}
\usepackage[left]{showlabels}
\usepackage{color, xcolor}
\usepackage{longtable}
\usepackage{enumerate}

\DeclareMathOperator{\graph}{\mathcal{A}}
\DeclareMathOperator{\node}{\mathcal{V}}
\DeclareMathOperator{\edge}{\mathcal{E}}

\newtheorem{theorem}{Theorem}

\newtheorem{definition}{Definition}
\newtheorem{algo}{Algorithm}

\newtheorem{remark}{Remark}
\newtheorem{lemma}[theorem]{Lemma}

\newcommand{\q}{1 - O(1/n)}
\newcommand{\calM}{\mathcal{M}}
\newcommand{\calS}{\mathcal{S}}
\newcommand{\calD}{\mathcal{D}}
\newcommand{\cG}{{\cal G}}

\newcommand{\beq}{\begin{equation}}
\newcommand{\eeq}{\end{equation}}

\newcommand{\Rhat}{\hat{R}}
\newcommand{\bitem}{\begin{itemize}}
\newcommand{\eitem}{\end{itemize}}

\newcommand{\goto}{\rightarrow}

\newcommand{\beqn}{\begin{equation}}
\newcommand{\eeqn}{\end{equation}}
\newcommand{\balign}{\begin{align}}
\newcommand{\ealign}{\end{align}}

\newcommand{\lhat}{\hat{\ell}}

\newcommand{\E}{E}
\newcommand{\Etilde}{\tilde{\E}}

\def\limsup{\mathop{\overline{\rm lim}}}

\newcommand{\be}{\begin{equation}}
\newcommand{\ee}{\end{equation}}
\newcommand{\ba}{\begin{array}}
\newcommand{\ea}{\end{array}}
\newcommand{\OX}{\Omega}
\newcommand{\OXtilde}{\tilde{\Omega}}

\newcommand{\Xihat}{\hat{\Xi}}

\newcommand{\RR}{\mathbb{R}}

\newcommand{\munorm}{R}

\def\ben{\begin{equation*}}
\def\een{\end{equation*}}
\def\bea{\begin{eqnarray}}
\def\eea{\end{eqnarray}}
\newcommand{\name}{spectral clustering on network-adjusted covariates}
\newcommand{\names}{spectral clustering on network-adjusted covariates }

\usepackage{amssymb}

\newcommand{\blue}{\textcolor{black}}

\begin{document}



\markboth{Y. Hu and W. Wang}{Community detection with network-adjusted covariates}

\title{Network-Adjusted Covariates for Community Detection}
\author{Yaofang Hu and Wanjie Wang}
\date{\today}

\maketitle

\begin{abstract}
Community detection is a crucial task in network analysis that can be significantly improved by incorporating subject-level information, i.e. covariates. 
Existing methods have shown the effectiveness of using covariates on the low-degree nodes, but rarely discuss the case where communities have significantly different density levels, i.e. multiscale networks. 
In this paper, we introduce a novel method that addresses this challenge by constructing network-adjusted covariates, which leverage the network connections and covariates with a node-specific weight to each node. 
This weight can be calculated without tuning parameters.
We present novel theoretical results on the strong consistency of our method under degree-corrected stochastic blockmodels with covariates, even in the presence of mis-specification and multiple sparse communities. Additionally, we establish a general lower bound for the community detection problem when both network and covariates are present, and it shows our method is optimal for connection intensity up to a constant factor. Our method outperforms existing approaches in simulations and a LastFM app user network. We then compare our method with others on a statistics publication citation network where $30\%$ of nodes are isolated, and our method produces reasonable and balanced results.
\end{abstract}


\section{Introduction}
\label{sec:intro}
Network data refers to the records of connections or relationships between subjects. It can be found in a large variety of scientific fields \citep{casc, chen2006detecting, sporns2016modular, deco2011dynamical, jacob2011role, gil1996political, leskovec2012learning, ying2018graph}. 
A network data is often represented as a graph $\graph = (\node, \edge)$, where $\node$ denotes the set of $n$ nodes, i.e. subjects, and $\edge$ denotes the set of edges or links between nodes.
Mathematically, $\graph$ with $|\node| = n$ can also be expressed by an adjacency matrix $A \in \{0,1\}^{n\times n}$ , where $A_{ij} = A_{ji} = 1$ if $(i,j) \in \edge$ and $A_{ij} = A_{ji} = 0$ otherwise. 
Studies on the network data provide valuable insights into the structure by exploring interactions among subjects.

Among topics on network analysis, the most influential one is the community detection problem. It is also recognized as the clustering of nodes in the area of network analysis. Communities refer to the groups of nodes, so that nodes within the same community are more densely connected than nodes in different communities. If we consider networks on genome data or brain images, the community structure can represent functional modules or coordination between nodes. 
Therefore, detecting the communities can provide insights into challenging biological problems. 

There are plenty of studies of the algorithms and theoretical limits on the community detection problem, especially for sparse networks. Let $d_i$ denote the number of neighbors of node $i$. It has been found in multiple works that $\min_{i\in \node}E(d_i) \geq C\log n$ for a constant $C > 0$ is required to assure the exact recovery of community labels for each node \citep{DCBickel, SBMlower}. Such limits are often referred to as information theoretical lower bounds. 
How to handle networks with some nodes' expected degree being bounded remains a challenging problem. \cite{joseph2016impact} suggests classifying all sparsely connected nodes as a single community, but this might be an over-simplification in some scenarios. In \cite{lei2020consistency}, multiscale networks where the communities have different density levels have been discussed, but $E(d_i) \geq C\log n$ needs to hold for all levels.

This work studies the possibility of sparse network community detection by leveraging covariate information. We consider a challenging mixture setting where the communities can be either relatively dense or extremely sparse. 
According to the fundamental limits in community detection \citep{DCBickel, SBMlower}, we require the relatively dense communities to have expected degrees larger than $c_d \log n$, where $d_i \gg \log n$ is also allowed. 
Meanwhile, nodes in extremely sparse communities have expected degrees no larger than $c_s\log n$, where $d_i = O(1)$ is allowed.
Mathematically, a multiscale network with both relatively dense and extremely sparse communities is defined as follows. 
\begin{definition}\label{def:comm}
Consider a network $\graph = (\node, \edge)$ and constants $c_d > c_s > 0$.  Consider community $k$, we call it a
\begin{enumerate}
\item {\it (Relatively) dense community}, if $E(d_i) \geq c_d \log n$ for all $i$ so that $\ell(i) = k$; 
\item {\it (Extremely) sparse community}, if $E(d_i) \leq c_s\log n$, for all $i$ so that $\ell(i) = k$.
\end{enumerate}
The network $\graph$ is called a multiscale network with extremely sparse communities if both kinds of communities exist.
\end{definition}

Consider the nodes in sparse communities where the labels cannot be recovered by the network.
Modern datasets often include subject-level covariates other than the network. Let $x_i \in \mathbb{R}^{p}$ denote the covariate vector of node $i$. The covariate matrix is defined as $X = (x_1, \cdots, x_n)' \in \mathbb{R}^{n \times p}$. In biology, these covariates may include demographic information, clinical or genetic data, or other relevant features. The covariates often depend on the community structure of the subjects. Therefore, integrating $X$ with $\graph$ will largely improve the community detection results, especially for nodes in sparse communities.

The integration of covariates with network data for community detection has become popular very recently. In \citet{BayesianNewman}, a model was proposed that incorporates low-dimensional discrete covariates and the network based on community memberships. The maximum likelihood estimate is obtained using the belief propagation package. 
\cite{attrisparse} interprets it as an optimization problem on the weighted sum of network Laplacian and covariates kernel matrix. 
In \cite{casc}, the spectral clustering method was discussed, with an application on a weighted sum of the network Laplacian and covariates. 
Later, the issue of mis-matching between covariates and community labels became a concern.
To solve this problem, \cite{cesna} uses the maximum likelihood approach where covariates and network are considered separately, and \cite{jcdc} optimizes a joint community detection criterion analogous to the modularity. 
Other approaches can be found in various studies, including \cite{FengEdge}, \cite{FengNodal},  \cite{InferenceCov} and \cite{xu2022covariate}.

The theoretical consistency of network community detection algorithms has also been a hot topic in recent years.
For networks, studies first discuss weak consistency, i.e., the clustering error rate converges to 0 as $n \to \infty$. Later strong consistency (i.e., exact recovery) is brought into concern, which means that the label of every node can be exactly recovered with a high probability. Without covariates, weak consistency can be achieved when $E(d_i) \to \infty$ and strong consistency requires $E(d_i) \geq C\log n$ \citep{abbe2015exact,gao2017achieving,abbe2020entrywise, chen2021spectral}. With covariates, weak consistency has been found for algorithms under regular conditions. Further, for the high-dimensional covariates, \cite{contextualsbm} sets up the fundamental limit of signal-to-noise ratio to guarantee weak consistency, and \cite{ma2023community} generalizes the results to multi-layer networks. 
Yet there are very few works on the strong consistency results when covariates are involved. The only work is \cite{fanPCA}, where the two-community stochastic blockmodel is considered. The upper bound of their proposed spectral method and the lower bound for strong consistency have been established.

A direct generalization of
these methods to multiscale networks with covariates faces several challenges. 
Firstly, in multiscale networks, the usefulness of covariates depends on the density of the communities. Indeed, dense communities can be recovered with connections alone, while a successful recovery of sparse communities would depend more on the covariates.
However, most existing algorithms leverage the network and covariates by putting a single weight on the whole covariate matrix $X$, without calibrations on the node-specific effects.
Secondly, to elaborate on the performance of each node in relatively dense communities and sparse communities,
we want to establish the strong consistency results of our algorithm. 
However, a multiscale network will induce multi-scale errors, which is a challenge in theoretical analysis.

We introduce a novel approach called \name, which is tuning-free and efficient on multiscale networks with covariates. This approach contains two steps. We first define the network-adjusted covariate vectors 
\[
y_i = \alpha_i x_i + \sum\nolimits_{j:A_{ij} = 1} x_j, \qquad i \in [n].
\]
The new covariate $y_i$ combines the original covariate $x_i$ and the network information using $\sum_{i:A_{ij} = 1} x_j$. We design the node-specific coefficient $\alpha_i$, so that it effectively balances the contribution of $x_i$ and the neighbors. 
Then we apply spectral clustering on the network-adjusted covariate matrix $Y=[y_1,\ldots,y_N]$.  
Under the degree-corrected stochastic blockmodel \citep{DCSBM, DCzhuji, DCBickel}, we prove novel results on spectral properties of $Y$, where we control the row-wise distance between the population and empirical spectral matrices. It hence induces strong consistency of our new approach. We further set up the lower bound, which meets the upper bound induced by our algorithm up to a constant. 

Our work also considers the challenging scenarios where the covariates can be misspecified.
For node $i$, the covariate $x_i$ may not follow the common covariate distribution of nodes in this community. This mis-specification may come from random error or a systematic mismatching between covariates and community labels. 
By our new spectral analysis results, we find that the node label can be exactly recovered, when the node is either in the relatively dense communities or its covariate is correctly specified. In other words, even with the existence of mis-specification, our method still recovers the node labels as long as the information (either from $\graph$ or $X$) is sufficient. 
Such analysis on each node is novel.


The spectral information is commonly used in various statistical fields, including the community detection; see \cite{npca}, \cite{SC3}, \cite{opca}, \cite{SC1}, and \cite{SCORE}. 
Weak consistency can be proved by controlling the Frobenius norm with the Davis-Kahan theorem \citep{SCORE, lei2015consistency}. Recently, \cite{fan2018eigenvector} has established the upper bound on the $\ell_{\infty}$ norm of the eigenvector perturbation. This improvement has motivated the strong consistency results of spectral methods; see \cite{abbe2020entrywise} and \cite{su2019strong}. 
When the covariates are included, \cite{casc} applies the spectral method with weak consistency and \cite{fanPCA} considers an aggregate spectral method with the strong consistency results. 
Therefore, we consider the spectral clustering for the multiscale networks with covariates. We demonstrate that our method can achieve the exact recovery results, except for low-degree nodes that are mis-specified.

To conclude, this work discusses multiscale networks with covariates that mis-specification is considered. We propose the new network-adjusted covariate vectors that assign node-specific weights to covariates. 
Using these network-adjusted covariates, we propose a tuning-free and computationally efficient community detection method. 
We provide solid theoretical results for the new method. The entry-wise perturbation of eigenvectors shows that the label recovery is based on $\graph$ for nodes in dense communities and $X$ for nodes in sparse communities. Hence, an exact recovery can be achieved when the mis-specification happens only in dense communities. 
We further establish the lower bound for the community detection problem on networks with covariates. 
The lower bound matches the upper bound from our new method up to a constant factor, which suggests the optimality of our approach.

\section{Methodology}
\subsection{Notations and background}\label{sec:notation}
We represent a network $\graph$ with covariates as a duplex $(A,X)$, where $A\in\RR^{n\times n}$ is the adjacency matrix and $X\in \RR^{n\times p}$ is the covariate matrix. Each row in $X$ is $x_i$, the covariate vector associated with node $i$. 
For the adjacency matrix $A$, let $d_i = \sum_{j=1}^n A_{ij}$ denote the degree of node $i$ and $\bar{d} = \sum_{i=1}^n d_i/n$ be the average degree. Let $\mathcal{D}$ denote nodes in relatively dense communities and $\mathcal{S}$ denote nodes in sparse communities.  $\calD$ and $\calS$ are unknown to us. 

Let $K$ be the number of communities and $\ell \in \RR^n$ be the community label vector, where each entry $\ell(i) \in [K]$ corresponds to the community membership of node $i$. It can also be represented in the matrix form $\Pi \in \{0,1\}^{n \times K}$, where $\Pi(i, j) = 1$ if $\ell(i) = j$ and 0 otherwise. 
Our objective is to recover $\ell$.

For a vector $a$, $\|a\|$ gives the $\ell_2$ norm of $a$ and $\|a\|_{\infty} = \max_i\|a_i\|$ gives the $\ell_{\infty}$ norm. 
Let $A$ be a matrix, $\lambda_k(A)$ denotes the $k$-th largest singular value of $A$, and $\|A\| = \lambda_1(A)$. 
For two series $a_n$ and $b_n$, we say $a_n \asymp b_n$ if there is a constant $C$, such that $a_n \leq C b_n$ and $b_n \leq C a_n$ when $n$ is large enough. We say $a_n \lesssim b_n$ if $\limsup_{n \goto \infty} a_n/b_n \leq 1$. We have $a_n \gtrsim b_n$ in a similar way. Finally, we use the notation $[N]:=\{1,\ldots,N\}$ for any integer $N$.

\subsection{Network-adjusted covariates}\label{sec:nac}
To leverage the network and covariates, we propose the network-adjusted covariate vectors:
\be
y_i = \alpha_i x_i + \sum\nolimits_{j:A_{ij} = 1} x_j, \quad i \in [n].
\ee
Here, the weight $\alpha_i$ of the node's covariate is defined as 
\be\label{eqn:alpha}
\alpha_i = \frac{\bar{d}/2}{d_i/{\log n} + 1}.
\ee

The network-adjusted covariate vector $y_i$ consists of two parts: the covariate vector of the node itself, $\alpha_i x_i$, and sum of covariates of its neighbors, $\sum_{j:A_{ij} = 1} x_j$. 
The former conveys the node's individual covariate information, while the latter establishes the node's network information. Similar methods utilizing neighbors can be found in \cite{hu2022graph}. 

Here is the intuition of how $\alpha_i$ balances these two parts.
Consider a multiscale network. When the community sizes are comparable, the average degree $\bar{d}$ is dominated by the dense communities. Let $d_{\calD} \geq c_d\log n$ and $d_{\calS} \leq c_s \log n$ denote the average degree of the dense communities and sparse communities, respectively. Hence $\bar{d} \asymp d_{\calD}$. 
Suppose nodes in the same community have expected degrees at the same asymptotic rate, then $\alpha_i \approx (\log n)/2$ for $i \in \calD$ and $\alpha_i \approx \bar{d} \asymp d_{\calD}$ for $i \in \calS$ as the weight of $x_i$. 
Meanwhile, recall that the number of neighbors is $d_i \asymp d_{\calD}$ for $i \in \calD$ and $d_i \asymp d_{\calS} \leq d_{\calD}$ for $i \in \calS$. 
Therefore, $\|y_i\| \asymp d_{\calD}\|x_i\|$ for all nodes. Further, when $c_{\calD}$ is sufficiently large, $y_i$ focuses on $\sum_{j:A_{ij} = 1} x_j$ when $i \in \calD$ and $\alpha_i x_i$ when $i \in \calS$. Hence, this new network-adjusted covariate vector $y_i$ can efficiently leverage the information from both the network and original covariates.

The definition of $\alpha_i$ in \eqref{eqn:alpha} includes two manually decided factors. The numerator $\bar{d}/2$ has a constant factor of $1/2$. 
We want to point out that this constant factor is not essential and any constant between 0 and 1 can be used instead. The strong consistency results in Section \ref{sec:consistency} hold for any such constants; proofs are in supplementary materials. 
The denominator $d_i/\log n + 1$ has a $\log n$ term. This term is decided by the definition of relatively dense communities and extremely sparse communities, which traces back to the fundamental limits of strong consistency. 
In our numerical tests, we have found that using $1/2$ and $\log n$ yields good results.

Let $Y = (y_1, \cdots y_n)'$ be the network-adjusted covariate matrix, then 
\begin{equation}\label{eqn:anc}
Y = AX + D_{\alpha} X, 
\end{equation}
where $D_{\alpha}$ is a diagonal matrix with diagonals as $(\alpha_1, \cdots, \alpha_n)$. 

Consider the special case that $\graph$ has a uniform scale and $d_i \geq c_d \log n$, i.e., $\node = \calD$. By the formula, we have $D_{\alpha} \approx (\log n/2) I$ and $Y \approx \{A + (\log n/2) I\}X$. The left singular vectors of $Y$ mainly depend on $A$. Hence, community detection based on the left singular vectors yields the same error rate based on $A$.


\subsection{Spectral clustering on network-adjusted covariates}\label{sec:cascore}
Spectral methods on community detection were first proposed in \cite{npca}, and have since been developed in various directions, such as spectral methods on the graph Laplacian, regularized Laplacian, non-backtracking matrix, and more \citep{opca, krzakala2013spectral, joseph2016impact, SC3}. Theoretical discussions about eliminating the degree effects using spectral methods have been shown under degree-corrected stochastic blockmodels in \cite{SCORE} and \cite{lei2015consistency}.
In \cite{casc}, spectral methods are employed for networks with covariates, on a weighted summation of the network Laplacian and covariate matrix, where the weight is a tuning parameter. \cite{fanPCA} has discussed the aggregated spectral methods on the two-community stochastic blockmodel.
Here, we apply spectral clustering on the network-adjusted covariate matrix $Y$ in \eqref{eqn:anc} and propose Algorithm \ref{alg1}.

\begin{algo}\label{alg1}
Spectral clustering on network-adjusted covariates

Input: adjacency matrix $A$, covariate matrix $X$, number of communities $K$.
\begin{enumerate}
    \item Find $Y = AX + D_{\alpha} X$, where $D_{\alpha}$ is defined in \eqref{eqn:alpha}. 
\item Find the top $K$ left singular vectors $\Xihat = [\hat{\xi}_1, \ldots, \hat{\xi}_K]$ of $Y$.
\item Find $\Rhat$ by normalizing $\Xihat$ such that each row has norm 1.
\item Perform $k$-means clustering on $\Rhat$ with $K$ clusters, treating every row as a data point. 
\item The output label vector $\hat{\ell}$ by $k$-means in Step 4 gives us the community label. 
\end{enumerate}
\end{algo}

In Step 4, we apply the built-in {\it kmeans} function in {\it R}, which finds a local optimum by the algorithm in \citet{kmeans}. To reduce errors, we use multiple random seeds.  

Here is the high-level intuition why applying $k$-means on $\Rhat$ yields a satisfactory community detection result. 
As we explained, the network-adjusted covariate vectors $y_i$ are dominated by the covariates of neighbors if $i \in \mathcal{D}$ or by the covariate itself if $i \in \mathcal{S}$. 
Therefore, consider a relatively dense community $k_d$ and all nodes in community $k_d$ share the same distribution of neighbors.
Thus, the sum of neighbors' covariates has the same distribution, up to the degree heterogeneity factor. Therefore,  rows in $Y$ corresponding to community $k_d$ share the same centre up to a constant factor. Now we turn to a sparse community, say $k_s$. For $i$ in community $k_s$, $y_i$ is dominated by $\alpha_i x_i$. Since $x_i$s have the same distribution, up to the constant factor $\alpha_i$, these $y_i$s have the same distribution. Again, rows of $Y$ corresponding to community $k_s$ share the same centre up to a constant factor. 
With a delicate random matrix analysis, we can prove the left singular matrix $\Xihat$ in Step 2 inherits such consistency within each community, and hence rows of $\Rhat$ in Step 3 corresponding to the same community will have the same centre, eliminating the constant factor by normalization.
The $k$-means algorithm minimizes the within-cluster sum of the squared distance to the centre, which achieves the true labels. 


\subsection{Generalization with uninformative covariates}
The newly proposed network-adjusted covariate matrix $Y$ can be seen as a product of $A+D_{\alpha}$ and $X$. By linear algebra, a meaningful $\hat{R}$ requires $X$ to hold some information on the community structure. In other words, $X$ cannot be uninformative; otherwise, involving $X$ is pointless. In most cases, researchers can tell whether this is the case based on their experience. 

But for the sake of completeness, we still take this case into consideration. When it is difficult to decide whether $X$ should be involved, we propose a slightly modified version, Algorithm \ref{alg2} as follows. 
In Algorithm \ref{alg2}, we combine the new covariates matrix $YY'$ and the network $AA'$ by a weighted summation, and then apply spectral clustering on this combined matrix. 
The new term $AA'$ does not rely on $X$. Adding it helps us to handle the extreme scenario when $X$ is uninformative.

\begin{algo}\label{alg2}
Spectral clustering on generalized network-adjusted covariates

Input: adjacency matrix $A$, covariate matrix $X$, number of communities $K$.
\begin{enumerate}
\item Find $Y = AX + D_{\alpha} X$, where $D_{\alpha}$ is defined in \eqref{eqn:alpha}. 
\item Define $L = YY' + \beta n AA'$. 
\item Find the top $K$ left singular vectors $\Xihat = [\hat{\xi}_1, \ldots, \hat{\xi}_K]$ of {$L$}.
\item Find $\Rhat$ by normalizing $\Xihat$ such that each row has norm 1.
\item Perform $k$-means clustering on $\Rhat$ with $K$ clusters, treating every row as a data point. 
\item The output label vector $\hat{\ell}$ by $k$-means in Step 4 gives us the community label. 
\end{enumerate}
\end{algo}
In Section \ref{sec:ainclude}, we show that the oracle matrix of $L$ can be written in the same format of $YY'$, but with a generalized definition of $X$. Therefore, we call it as ``generalized network-adjusted covariates".


\blue{The tuning parameter $\beta$ intends to balance the term $nAA'$ and $YY'$. 
Theoretical analysis in Section \ref{sec:ainclude} suggests $\beta$ to be a constant related to $\|x_i\|^2$. 
For numerical analysis, we choose $\beta = \|\bar{x}\|^2$, where $\bar{x}$ is the average covariate vector. It has shown promising clustering results in data analysis. 
To understand this selection, consider the simplified case that $X$ is uninformative, i.e., all nodes have the same mean covariate vector $\mu = E(x_i)$. For this case, $A$ must be dense and $Y \approx AX$. When $\mu$ overrides the noise in $x_i$,  
it follows that $\|YY'\| \approx \|AXX'A'\| \approx \|A 1_n\mu' \mu 1'_n A'\| \leq  n \|\mu\|^2 \|AA'\|$, where $1_n \in \mathcal{R}^n$ has all entries as 1. 
When $\beta \geq \|\mu\|^2$,  $\|\beta n AA'\| \geq n \|\mu\|^2 \|AA'\| \geq \|YY'\|$. So $L \approx \beta n AA'$, which provides community information. 
This motivates us to use $\beta = \|\mu\|^2$, which becomes $\|\bar{x}\|^2$ as the data version.
For special case that $\|\mu\|$ is much smaller than $\|x_i\|$, which can occur when $\mu = 0$ due to signal cancellation, quantiles of $\{\|x_i\|^2\}_{i\in[n]}$ may be a good choice for $\beta$. Further theoretical discussion of $\beta$ can be found  in Section \ref{sec:ainclude}.}

\section{Theoretical guarantee}\label{sec:theory}
\subsection{Degree corrected stochastic blockmodel with covariates}
To formulate the consistency of our proposed approach,
we first model the network with covariates $(A, X)$, under the assumption that $A$ and $X$ are independent given $\ell$. Then we introduce the relatively dense and sparse communities and mis-specification into the model. 

One of the most popular network models is the (degree-corrected) stochastic blockmodel. The original stochastic blockmodel was first proposed in the seminal work \citep{SBM} and it produced promising community detection results. Later works \citep{DCSBM, DCBickel, DCselection, DCzhuji} generalized it to the degree-corrected stochastic blockmodel, which allows for degree heterogeneity.

We follow the same line to model the network. Say $\graph$ has $n$ nodes in $K$ communities, and the community membership matrix is $\Pi$. 
Define a symmetric matrix $P \in \mathbb{R}^{K \times K}$, where $P(k, l)$ denotes the connection intensity parameter between a node in community $k$ and a node in community $l$. To account for degree heterogeneity, we introduce a diagonal matrix $\Theta$ with diagonals $\Theta_{ii} = \theta_i$, representing the popularity of node $i$. 
The adjacency $A$ has Bernoulli distributed entries with parameter $P(A_{ij} = 1) = \theta_i \theta_j P\{\ell(i), \ell(j)\}$, so the probability of an edge between nodes $i$ and $j$ depends on their popularity and the connection intensity between the communities to which they belong.
In matrix form, the adjacency matrix $A$ can be fully identified by $E(A|\Pi)$, that is 
\be\label{eqn:dcsbm}
E(A|\Pi) = \Omega_A - diag(\Omega_A), \quad \Omega_A = \Theta \Pi P \Pi' \Theta.
\ee
Here, $diag(\Omega_A)$ is the diagonal matrix formed by replacing all the off-diagonals of $\Omega_A$ to be 0. This eliminates the possibility of self-loops in the network. 

Now we model the covariates. Given the label $\ell$, we assume $X$ is independent of $A$. The covariates $x_i$ is generated by a standard cluster model (\citet{jin2016influential, jin2017phase}), that they are independently distributed as 
\be\label{eqn:covcluster}
x_i|\Pi \sim F_{k},\quad \ell(i) = k. 
\ee
Here, $F_k$ is a general distribution for community $k$, $k \in [K]$. 
We further model the mis-specification issue. Let $\calM$ denote the set of mis-specified nodes, which means $x_i$ does not follow $F_k$ for $i \in \calM$ and $\ell(i) = k$. We allow $x_i$ to follow any distribution $G_i$, which can be either $F_k$ that $k \neq \ell(i)$ or other distributions.

Combining \eqref{eqn:dcsbm} and \eqref{eqn:covcluster} gives us the degree-corrected stochastic blockmodel with covariates.  
\begin{definition}[Degree-corrected stochastic blockmodel with covariates]
    Consider a network $\graph = (\node, \edge)$, where each node $i \in \node$ has a covariate vector $x_i$. We call the network follow a degree-corrected stochastic blockmodel with covariates if  \eqref{eqn:dcsbm} and \eqref{eqn:covcluster} are satisfied, with parameter set $(\Theta, K, P, \Pi, F_{[K]}, \calM)$. 
\end{definition}

Under the degree-corrected stochastic blockmodel with covariates, we interpret the definitions of relatively dense communities and sparse communities. Recall $\theta_i$ is the degree heterogeneity parameter of node $i$ and the expected degree $E(d_i) \leq n \theta_i \theta_{\max}$. Let $\theta_{\max} = \|\theta\|_{\infty}$ denote the maximum.
\begin{definition}\label{def:dcsbmcomm}
Consider a network $\graph = (\node, \edge)$ that follows the degree-corrected stochastic blockmodel with parameters $(\Theta, K, P, \Pi)$ and $\theta \in \mathbb{R}^n$ denote the diagonals of $\Theta$ with $\theta_{\max} = \|\theta\|_{\infty}$. 
Consider community $k$, we call it a
\begin{enumerate}
\item {\it (Relatively) dense community}, if there exist constants $c, c_d>0$ so that $\theta_i \geq c \theta_{\max}$ and $n\theta_i\theta_{\max} \geq c_d \log n$ for all $i$ so that $\ell(i) = k$; 
\item {\it Sparse community}, if there exists a constant $0 < c_s < c_d$, so that $n\theta_i \theta_{\max} \leq  c_s\log n$, for all $i$ so that $\ell(i) = k$.
\end{enumerate}
\end{definition}

Let $\calD$ be the set of nodes in relatively dense communities and $\calS$ be the set of nodes in sparse communities. We suppose the node set $\node = \calD \cup \calS$. 
By Definition \ref{def:dcsbmcomm}, all nodes in $\calD$ have expected degree $E(d_i) \asymp n\theta_i\theta_{\max} \geq c_d \log n$. Hence, the diverging $E(d_i)$ indicates sufficient network information. Meanwhile, all nodes in $\calS$ have expected degree $E(d_i) \lesssim n\theta_i \theta_{\max} \leq c_s\log n$. 
In the extremely sparse case, $E(d_i) \to 0$ for $i \in \calS$. It is challenging and the covariates $X$ will be leveraged for accurate clustering.

Most existing works assume that $\theta_i$'s divergence at the same rate and that $n\theta_{\max}^2 \geq c\log n$; see \citet{SC1}, \citet{SCORE} and \citet{krzakala2013spectral}. 
However, some communities in practice tend to make more connections while other communities have more isolated nodes. 
This phenomenon was put forward by \citet{joseph2016impact}, who used  terms ``dense and weak clusters" to refer to communities with different connection intensities, under the stochastic blockmodel. They studied the case where the size of weak clusters does not increase with $n$ and proved the consistency on the dense clusters. By leveraging the covariates, we can achieve exact recovery results even for these sparse communities using our new method.

By the high-level analysis in Section \ref{sec:cascore}, the label recovery of nodes in $\calD$ relies on the network information and of nodes in $\calS$ relies on the covariates. 
Ideally, only the mis-specified nodes in $\calS$ should be affected and mis-clustered, i.e. the set $\calM \cap \calS$. 
Thus we define a parameter as the proportion of these nodes: 
\be\label{misspec}
\epsilon = |\calM \cap \calS|/n. 
\ee
This parameter captures the proportion of the ``essential errors" under the model. Nodes in $\calM \cap \calS$ cannot be labeled correctly, no matter the method applied. We give a theorem to rigorously demonstrate this intuition in Section \ref{sec:lowerbound}. 
One interesting fact is that nodes in $\calM \cap \calD$ are not included, even though they are mis-specified.

\subsection{Consistency of spectral clustering on network-adjusted covariates}\label{sec:consistency}
Under the degree-corrected stochastic blockmodel with covariates, we demonstrate strong consistency of \name. 

To derive the consistency, we first consider the oracle case where model parameters are known. With the population version of $Y$, we derive the left singular matrix $\Xi$ in Lemma \ref{lem:xiformsimple}. It shows the rows of $\Xi$ are closely related to the community labels of corresponding nodes.
We then introduce noise into the model and find $\hat{\Xi}$ in Algorithm \ref{alg1}. In Theorem \ref{thm:dist}, we bound the $\ell_2$ norm between each row of $\Xi$ and $\hat{\Xi}$ up to a rotation.
This row-wise bound leads to the exact recovery of the underlying community memberships on the ``good nodes", i.e. nodes in $(\calM \cap \calS)^c$, as shown in Theorem \ref{thm:rate}.
General forms of the theorems and proofs are available in supplementary materials.

Consider the oracle case where the parameters are known.  
We define the population version of $A, X$ and $D_{\alpha}$. By \eqref{eqn:dcsbm}, $E(A) = \Theta \Pi P \Pi' \Theta$. The oracle matrix for $X$ is that all the nodes display the correct information, which means $\Etilde(x_i) = E(F_{\ell(i)})$ for all $i \in [n]$. Note that  $\Etilde(x_i)$ and $\E(x_i)$ may not be the same if $x_i$ is misspecified. We define $\Etilde(X)$ as the matrix formed by these mean vectors.
Finally, define 
\[
\alpha_i^* = E(\bar{d})\log n/[2\{E(d_i) + \log n\}], \qquad i \in [n],
\]
and $D_{\alpha^*}$ is a diagonal matrix formed by $\alpha_i^*$s.

Now we set up an oracle matrix for the network-adjusted covariate matrix $Y = AX+D_{\alpha}X$. 
Instead of reverting every part to the population version, we consider the dominating terms only. Consider the $i$-th row in $Y$, i.e. $y_i$. If $i \in \calD$, the dominating term is $\sum_{j:A_{ij} = 1} x_j$ and the population version is the $i$th row of $E(A)\Etilde(X)$. Meanwhile, if $i \in \calS$, then the dominating term is $\alpha_i x_i$ and the population version is the $i$th row of $D_{\alpha^*}\Etilde(X)$.  
To express the oracle matrix, let $I_{\calD} \in \mathbb{R}^{n \times n}$ denote the identity matrix where only diagonals on $\calD$ are preserved and others are set as 0. We similarly define $I_{\calS}$. 
The oracle matrix is defined as follows:
\be\label{eqn:oracle}
\Omega = \{I_{\calD}E(A)I_{\calD}\} \Etilde(X) + I_{\calS}\{D_{\alpha^*}\Etilde(X)\} = \{I_{\calD} E(A) I_{\calD} + I_{\calS} D_{\alpha^*}\}\Etilde(X). 
\ee

The nodes can be decomposed into 3 disjoint sets: $\calD$ where nodes are in relatively dense communities; $\calS \cap \calM^c$ where nodes are in sparse communities with correct covariate distribution; and $\calS \cap \calM$ where nodes are in sparse communities and the covariates are mis-specified. Only nodes in the first two sets are possible to be correctly clustered. Theorem \ref{thm:lowerboundnew} in below rigorously states the impossibility for nodes in $\calS \cap \calM$. 
Hence, we define the set of ``good" nodes as follows: 
\be\label{eqn:good}
\cG = \calD \cup (\calS \cap \calM^c). 
\ee
Compare it with \eqref{misspec} and we can see $\epsilon = 1 - |\cG|/n$.
We then define $I_{\cG}$ in a similar way as $I_{\calD}$. 

\begin{lemma}\label{lem:xiformsimple}[Spectral analysis of the oracle matrix]
Consider the oracle matrix $\Omega$.
with a good node set $\cG$. 
Let $\OX_{\cG} = I_{\cG}\Omega$ be the oracle matrix with rows restricted on $\cG$.
Denote the singular value decomposition of $\OX_{\cG}$ as
$\OX_{\cG} = \Xi \Lambda U'$,
where $\Xi \in \mathbb{R}^{n \times K}$, $U \in \mathbb{R}^{p \times K}$ and $\Lambda \in \mathbb{R}^{K \times K}$. 

Under the degree-corrected stochastic blockmodel with covariates, there is 
\[
\Xi_i=\left\{\begin{array}{ll}
\theta_i v_{\ell(i)},& i\in \calD,\\
\alpha^*_i u_{\ell(i)},& i\in \calS\cap \calM^c,\\
0,& i\in \calS \cap \calM = \cG^c,\\
\end{array}
\right.
\]
where $v_k$'s and $u_k$'s  are $K$-dimension vectors. 
\end{lemma}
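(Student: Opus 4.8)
The plan is to exhibit $\OX_{\cG}$ as an \emph{exact} rank-$K$ factorization $\OX_{\cG}=\tilde{\Theta}\,\Pi\,W$, where $\tilde{\Theta}$ is diagonal, $\Pi$ is the membership matrix, and $W\in\mathbb{R}^{K\times p}$ stacks $K$ community ``centers'' in covariate space; reading off the top-$K$ left singular subspace then yields the claimed row structure. First I would simplify $\OX_{\cG}=I_{\cG}\Omega$ using the set identities $\calD\subseteq\cG$ and $\cG\cap\calS=\calS\cap\calM^c$, so that $I_{\cG}I_{\calD}=I_{\calD}$ and $I_{\cG}I_{\calS}=I_{\calS\cap\calM^c}$, giving
\[
\OX_{\cG}=\bigl(I_{\calD}\,E[A]\,I_{\calD}+I_{\calS\cap\calM^c}D_{\alpha^*}\bigr)E[X].
\]

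Next I would read off the three row types. For $i\in\cG^c=\calS\cap\calM$ the corresponding row of $I_{\cG}$ vanishes, so row $i$ of $\OX_{\cG}$ is $0$; since $\OX_{\cG}\OX_{\cG}'=\Xi\Lambda^2\Xi'$ has a zero $(i,i)$ entry, this forces $\Xi_i=0$, matching the third case. For $i\in\calS\cap\calM^c$ the row equals $\alpha_i^*\,E[x_i]'=\alpha_i^*\mu_{\ell(i)}'$, where $\mu_k:=E_{F_k}[x]$ and I use well-specification $E[x_i]=\mu_{\ell(i)}$. For $i\in\calD$ the row is $\sum_{j\in\calD,\,j\neq i}\theta_i\theta_j P(\ell(i),\ell(j))E[x_j]'$, in which $\theta_i$ factors out and the remaining weighted sum of neighbor covariate means depends on $i$ only through $\ell(i)$, apart from the excluded $j=i$ term addressed below. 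Collecting these, set $\tilde{\Theta}=\mathrm{diag}(\tilde\theta_i)$ with $\tilde\theta_i=\theta_i$ on $\calD$, $\tilde\theta_i=\alpha_i^*$ on $\calS\cap\calM^c$, and $\tilde\theta_i=0$ on $\cG^c$, and let $W$ have $k$-th row equal to the dense center $\sum_{j\in\calD}\theta_j P(k,\ell(j))E[x_j]'$ when $k$ is a dense community and $\mu_k'$ when $k$ is sparse; this gives $\OX_{\cG}=\tilde{\Theta}\Pi W$.

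Finally I would run the singular-subspace argument. Assuming $W$ has full row rank $K$ and that every community contains at least one good node (so $\tilde{\Theta}\Pi$ has rank $K$), the right factor $W$ is surjective onto $\mathbb{R}^K$, hence $\mathrm{col}(\OX_{\cG})=\mathrm{col}(\tilde{\Theta}\Pi)$, a $K$-dimensional subspace of $\mathbb{R}^n$. The top-$K$ left singular vectors $\Xi$ form an orthonormal basis of this same column space, so $\Xi=\tilde{\Theta}\Pi Q$ for some invertible $Q\in\mathbb{R}^{K\times K}$. Reading off rows, $\Xi_i=\tilde\theta_i\,Q_{\ell(i),\cdot}$, and setting $v_k=Q_{k,\cdot}$ for dense $k$ and $u_k=Q_{k,\cdot}$ for sparse $k$ reproduces all three cases of the lemma.

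The main obstacle is making the dense-row factorization exact. Because $E[A]=\Omega_A-\mathrm{diag}(\Omega_A)$ omits the diagonal, the sum over $j\in\calD$ excludes $j=i$, leaving a self-correction of order $\theta_i^2 P(\ell(i),\ell(i))E[x_i]'$ that is not of the clean form $\theta_i\times(\text{function of }\ell(i))$ and therefore would spoil the exact rank-$K$ structure. I expect this to be handled either by using $\Omega_A$ (diagonal included) in the oracle for this simplified statement, so that the self-term is reabsorbed into the center $W$, or, in the general version, by showing this single term is negligible relative to the neighbor sum, which has on the order of $n\theta_{\max}^2$ terms. The remaining care is verifying the identifiability and rank conditions on $W$ (and hence on $P$ and the $\mu_k$) that guarantee $\mathrm{col}(\OX_{\cG})$ is exactly $\mathrm{col}(\tilde{\Theta}\Pi)$, without which the singular subspace need not align with the membership structure.
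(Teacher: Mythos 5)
Your proposal is correct and takes essentially the same route as the paper: both rest on the exact rank-$K$ factorization $\OX_{\cG}=\tilde{\Theta}\Pi W$ with a diagonal weight matrix ($\theta_i$ on $\calD$, $\alpha_i^*$ on $\calS\cap\calM^c$, $0$ on $\cG^c$) and then read the row structure off the left singular vectors, the only cosmetic difference being that the paper writes $\Xi=\OX_{\cG}U\Lambda^{-1}$ directly from the SVD while you pass through the equality of column spaces to get $\Xi=\tilde{\Theta}\Pi Q$. The diagonal-of-$E[A]$ issue you flag is resolved exactly as you anticipated: the supplementary material defines the oracle with $\Theta\Pi P\Pi'\Theta$ (diagonal included), so the factorization is exact and no self-correction term arises.
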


Lemma \ref{lem:xiformsimple} shows that nodes in the same community share the same rows in $\Xi$, up to a constant factor. This constant factor is the degree heterogeneity parameter $\theta_i$ for $i \in \calD$ or the weightage $\alpha^* \approx E(\bar{d})/2$ for $i \in \calS \cap \calM^c$. The explicit formula for $u_k$ and $v_k$ can be found in supplementary materials. Normalizing $\Xi_i$ to be of unit length removes the constant factor.
When the centers (normlized $u_k$s and $v_k$s) are well separated, the labels of nodes in $\cG$ can be exactly recovered.
\begin{theorem}\label{thm:dist}[Row-wise empirical and oracle singular matrix distance]
Consider the degree-corrected stochastic blockmodel with covariates with parameters $(\Theta, K, P, \Pi, F_{[K]}, \calM)$, where $p >0$ is a constant, $\cG$ is the set of good nodes and $\epsilon = |\cG^c|/n$. Let $\Omega$ be the oracle matrix defined in \eqref{eqn:oracle}, $\Xi$ is the left singular matrix of $\Omega_{\cG}$ and $\hat{\Xi}$ consists of the top $K$ left singular vectors of $Y$. 

Let $c,C > 0$ be constants that vary case by case. 
We assume (i) the sub-matrix of $P$ that restricted to dense communities $P_{\calD}$ is full-rank; 
(ii) $\|x_i\|\leq \munorm$ almost surely, and for $i \in \calS \cap \cG$, with high probability $\|x_i- \Etilde(x_i)\|\leq \delta_X \munorm$;  (iii) $\lambda_{K}\{\Etilde(X)\} \geq c\sqrt{n}\munorm$; and (iv) the number of nodes in any community $n_k/n \geq c > 0$. Then there are threshold constants   $C_\theta$, $\epsilon_0, n_0$, and $\delta_0$,  so that if $\delta_X \leq \delta_0, \epsilon\leq \epsilon_0, n\geq n_0, n\theta_{\max}^2 \geq C_\theta \log n$,  there exists an orthogonal matrix $O$ and a constant $C > 0$ with probability $\q$ so that 
\[
\max_{i\in \cG}\|\Xihat_{i} - O\Xi_i \|\leq C(\delta_X+\sqrt{\epsilon} + 1/\sqrt{C_{\theta}})/\sqrt{n},
\]
where $\Xihat_{i}$ and $\Xi_i$ are vectors formed by $i$th row of $\Xihat$ and $\Xi$. 
\end{theorem}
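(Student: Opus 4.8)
The plan is to write $Y = \Omega + W$ with $W := Y - \Omega$ the noise matrix and to upgrade a standard subspace (Wedin) bound into the stated row-wise bound by a leave-one-out argument in the style of \citet{abbe2020entrywise} and \citet{chen2021spectral}. First I would record the structure of $W$ across the three node groups of Lemma~\ref{lem:xiformsimple}. On $\calD$ the oracle row is $\Omega_{i\cdot}=(E[A]E[X])_{i\cdot}$, so $W_{i\cdot}=(AX-E[A]E[X])_{i\cdot}+\alpha_i x_i$, where the added $\alpha_i x_i$ is lower order because $\alpha_i\asymp\log n$ while the row-sum scale of $E[A]$ is $\asymp n\theta_{\max}^2\ge C_\theta\log n$. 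On $\calS\cap\cG$ the oracle row is $\alpha^*_iE[x_i]$ and the dominant noise is the covariate fluctuation $\alpha^*_i(x_i-E[x_i])$, of size $\lesssim\alpha^*_i\delta_X L$, with $(AX)_{i\cdot}$ negligible since these degrees are $O(1)$. On $\cG^c=\calS\cap\calM$ the oracle row is $0$, so $W_{i\cdot}=Y_{i\cdot}$ is signal-free mass of size $\asymp\alpha^*_i L$, and there are $\epsilon n$ such rows. I would also fix the rotation $O$ to be the orthogonal factor aligning the right subspaces $\Uhat$ and $U$.

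Second, I would obtain the aggregate (Frobenius) control. Splitting $\|W\|$ into the three pieces above gives $\|W\|\lesssim \|(A-E[A])X\|+\alpha^*\,\|I_\calS(X-E[X])\|+\|I_{\cG^c}Y\|$. The first term is bounded by $\|A-E[A]\|\,\|X\|\lesssim\sqrt{n\theta_{\max}^2}\cdot\sqrt nL$ via the random-graph spectral bound of \citet{lei2015consistency} and $\|X\|\le\sqrt nL$; the second by $\alpha^*\delta_X\sqrt nL$; and the third by $\sqrt{\epsilon n}\,\alpha^*L$. Combined with a lower bound on $\lambda_K(\Omega_\cG)$ computed from assumptions (i)--(iv) (full-rank $P_\calD$, $\rank(E[X])=K$, $\lambda_K(E[X])\gtrsim L$, balanced communities), which also forces the condition number of $\Omega_\cG$ to be $O(1)$, Wedin's theorem yields $\|\hat\Xi-\Xi O\|_F\lesssim\|W\|/\lambda_K(\Omega_\cG)\lesssim\delta_X+\sqrt\epsilon+1/\sqrt{C_\theta}$, i.e.\ the same three terms but without the row-wise factor $1/\sqrt n$.

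Third, and this is the technical heart, I would pass from Frobenius to row-wise via the identity $\hat\Xi_i=Y_{i\cdot}\Uhat\hat\Lambda^{-1}$ (from $Y\Uhat=\hat\Xi\hat\Lambda$) and its oracle analogue $\Xi_i=\Omega_{i\cdot}U\Lambda^{-1}$ for $i\in\cG$, giving
\[
\hat\Xi_i-\Xi_iO \;=\; W_{i\cdot}\Uhat\hat\Lambda^{-1}\;+\;\Omega_{i\cdot}\bigl(\Uhat\hat\Lambda^{-1}-U\Lambda^{-1}O\bigr).
\]
For the first (noise) term I would introduce a leave-one-out matrix $Y^{(i)}$, identical to $Y$ except that the randomness attached to node $i$ (its covariate $x_i$ and its row/column in $A$) is frozen at expectation, so that $\Uhat^{(i)}(\hat\Lambda^{(i)})^{-1}$ is independent of $W_{i\cdot}$; then $W_{i\cdot}\Uhat^{(i)}(\hat\Lambda^{(i)})^{-1}$ is a sum of conditionally independent mean-zero vectors that concentrates by a Bernstein bound to $\lesssim(\delta_X+\sqrt\epsilon+1/\sqrt{C_\theta})/\sqrt n$, while the swap error $\|W_{i\cdot}\|\,\|\Uhat\hat\Lambda^{-1}-\Uhat^{(i)}(\hat\Lambda^{(i)})^{-1}\|$ is controlled by the leave-one-out stability $\|\hat\Xi-\hat\Xi^{(i)}\|$. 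For the second term I would use the incoherence $\|\Omega_{i\cdot}\|\lesssim\lambda_K(\Omega_\cG)/\sqrt n$ of the signal rows (which follows from $\|\Xi_i\|\asymp1/\sqrt n$ in Lemma~\ref{lem:xiformsimple} and the $O(1)$ condition number) together with the Frobenius bound of the previous step, so that $\|\Omega_{i\cdot}\|\cdot\|\Uhat\hat\Lambda^{-1}-U\Lambda^{-1}O\|\lesssim(\delta_X+\sqrt\epsilon+1/\sqrt{C_\theta})/\sqrt n$. A union bound over $i\in\cG$ on an event of probability $\q$ then completes the proof.

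The main obstacle is the leave-one-out step: establishing that $\Uhat^{(i)}$ (hence $\hat\Xi^{(i)}$) stays close to $\Uhat$ while being independent of node $i$, in a setting where the noise is driven simultaneously by the sparse random graph $A$ --- whose concentration on $\calS$ is poor because those degrees are only $O(1)$ --- and by the random covariate matrix $X$, and where an $\epsilon$-fraction of rows carry no oracle signal at all. Keeping these heterogeneous contributions separate, and in particular showing that the poorly concentrated sparse-network noise never enters the dominant row-wise term, is where the argument requires the most care; the remaining concentration and perturbation estimates are routine given assumptions (i)--(iv) and the density threshold $n\theta_{\max}^2\ge C_\theta\log n$.
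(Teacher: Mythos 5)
Your route is genuinely different from the paper's, and I believe it can be made to work. The paper never uses a leave-one-out construction: it splits the comparison into two steps, first comparing the singular vectors of $\Omega_{\cG}$ with those of $I_{\cG}Y$ via the entrywise singular-vector perturbation bound of Theorem \ref{thm:fan} (controlled through the combined norm $\tau(I_{\cG}Y-\Omega_{\cG})$, which is verified block-by-block with Hoeffding/Bernstein inequalities), and then comparing $I_{\cG}Y$ with $Y$ via a purpose-built ``stacking'' result (Proposition \ref{prop:extend}) showing that appending the $|\cG^c|=\epsilon n$ signal-free rows perturbs the $\cG$-rows of the singular vectors only by a factor $O(\sqrt{\epsilon})$. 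Your single-step decomposition $\Xihat_i-\Xi_iO=W_{i\cdot}\Uhat\hat{\Lambda}^{-1}+\Omega_{i\cdot}(\Uhat\hat{\Lambda}^{-1}-U\Lambda^{-1}O)$ absorbs the mis-specified rows directly into the Wedin/$\sqrt{\epsilon}$ term, which is arguably cleaner; the paper's stacking proposition instead buys an exact cancellation ($e_j'(\hat{A}\hat{A}'-AA')U=0$ for $j$ in the retained block) that avoids any probabilistic argument for that step. Two points in your sketch need repair. First, your aggregate bound omits the term $\E[A](X-\E[X])$ arising from $AX-\E[A]\E[X]$; its operator norm cannot be bounded by $\|\E[A]\|\,\|X-\E[X]\|\lesssim n\tmax^2\sqrt{n}L$, which is of the same order as $\lambda_K(\Omega_{\cG})$, so you must exploit the independence of the rows of $X$ (entrywise Hoeffding, exactly as in the paper's verification of the $\tau$-norm condition) to gain the needed $\log n/\sqrt{n}$ factor. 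Second, the leave-one-out matrix is awkward here because every weight $\alpha_j=(\bar{d}/2)/(d_j/\log n+1)$ depends on node $i$ through $\bar{d}$ (and through $d_j$ for neighbors $j$ of $i$), so ``freezing the randomness attached to node $i$'' does not decouple $D_\alpha$ from $W_{i\cdot}$ without an extra approximation step; but this machinery is in fact unnecessary, since $p$ is a constant and $\Uhat$ has orthonormal columns, so the deterministic bound $\|W_{i\cdot}\Uhat\hat{\Lambda}^{-1}\|\leq\|W_{i\cdot}\|/\lambda_K(Y)$ together with a row-wise concentration bound on $\|W_{i\cdot}\|$ already yields the claimed $(\delta_X+\sqrt{\epsilon}+1/\sqrt{C_\theta})/\sqrt{n}$ rate, and you can drop the leave-one-out step entirely.
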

Assumptions (i) and (iv) are regular conditions on networks and Assumptions (ii) and (iii) impose regularity conditions on the covariates. 
Assumption (i) requires linearly independent rows/columns of $P_{\calD}$ for different dense communities, a common requirement in community detection on $P$ \citep{SCORE, FengNodal}. We use the network information only for $\calD$ so the requirement is on $P_{\calD}$. 
Assumption (iv) is a standard requirement in stochastic blockmodels to ensure comparable community sizes. 
Assumption (ii) states the range of covariates $x_i$ and the concentration. Bounded covariates are preferred (same as \cite{casc}), although the results can be extended to unbounded cases, with more complicated interpretations. 
Nodes in sparse communities are labeled by their covariates, so for those nodes, the noise or deviation $\|x_i - \Etilde(x_i)\|$ must be small. The noise level is denoted by $\delta_X$. 
Assumption (iii) requires the smallest singular value of $\Etilde(X)$ to be $O(\sqrt{n}R)$, the same order as the largest singular value. It can be verified if the $K \times p$ matrix formed by $E(F_k)$'s is non-singular. On the other hand, if $\Etilde(X)$ has a rank $< K$, then even the oracle matrix $\Omega$ has a rank $< K$ and $\Xi$ can be insufficient for label recovery.
Finally, the results hold when the relatively dense communities have degrees exceeding $C_{\theta}\log n$; otherwise the network does not contribute any information. 

Theorem \ref{thm:dist} provides a row-wise bound for the closeness between the rows of $\Xihat$ and $\Xi$, up to a rotation. 
It supports the row-wise operations (normalisation and $k$-means) in Algorithm \ref{alg1}. The most notable aspect of the theorem is that the bound is row-wise, instead of the standard Frobenius norm bound on $\Xihat - \Xi O$ using the Davis-Kahan approach. To our best knowledge, this result is not proved in the context of degree-corrected stochastic blockmodel with covariates, and it cannot be obtained using existing analysis tools under our challenging mis-specified setting. A new singular vector ``stacking" result is developed in the supplementary material to handle mis-specification.
This row-wise analysis enables the separation of good nodes $\cG$ and other nodes $\cG^c$, leading to exact recovery on $\cG$, even in the presence of mis-specification. 

The bound  given in Theorem \ref{thm:dist} contains three parts: $1/\sqrt{C_{\theta}}$ due to density of the network, $\delta_X$ due to the randomness in $X$, and $\sqrt{\epsilon}$ due to the mis-specified nodes in sparse communities, i.e. $\cG^c$. It is worth noting that although only nodes in $\cG$ are considered, perturbation from $\cG^c$ is unavoidable since the singular vectors are obtained from the whole matrix. When all three terms are well bounded, we achieve exact recovery of the entire network.

\begin{theorem}\label{thm:rate}
[Strong consistency of Algorithm \ref{alg1}]
Suppose conditions (i)--(iv) in Theorem \ref{thm:dist} hold. 
Let $\hat{\ell}$ be the estimated labels by the spectral clustering method on $Y$. 
Then there is a constant $C_\theta$ independent of $n$, so that if $n\theta_{\max}^2 \geq C_\theta \log n$,  
there exists a permutation $\pi$ so that with probability $\q$, 
\[
\pi\bigl\{\hat{\ell}(i)\bigr\}=\ell(i),\quad \mbox{all } i\in \cG. 
\]
Therefore, the community detection error rate is bounded by $|\cG^c|/n = \epsilon$.
\end{theorem}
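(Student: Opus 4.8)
The plan is to pass from the row-wise singular-vector bound of Theorem~\ref{thm:dist} to exact label recovery through the two row-level operations of Algorithm~\ref{alg1}, namely normalization and $k$-means. First I would read off the normalized oracle structure from Lemma~\ref{lem:xiformsimple}. Since $\node = \calD \cup \calS$ classifies each community as wholly dense or wholly sparse, for $i \in \cG$ the normalized oracle row $\Xi_i/\|\Xi_i\|$ equals a single unit vector $m_{\ell(i)}$ per community: $m_k = v_k/\|v_k\|$ for dense $k$ and $m_k = u_k/\|u_k\|$ for sparse $k$, with the positive scalars $\theta_i$ and $\alpha_i^*$ cancelling. I would then establish constant-order separation $\min_{k\neq l}\|m_k - m_l\| \geq \Delta_0 > 0$ among all $K$ centers: the $v_k$ are distinguished by assumption~(i) (full-rank $P_\calD$) and the $u_k$ by assumptions~(ii)--(iii) ($\rank(E[X]) = K$ and $\lambda_K(E[X]) \geq cL$), working from the explicit formulas for $u_k,v_k$ in the supplement; dense-versus-sparse centers are separated by the same linear-algebra.

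Second, I would control the normalization. Because $\Xi$ has orthonormal columns and communities have comparable sizes (assumption~(iv)), the oracle row norms satisfy $\|\Xi_i\| \gtrsim 1/\sqrt{n}$ uniformly on $\cG$. Since $z \mapsto z/\|z\|$ is $2/\min(\|z\|,\|w\|)$-Lipschitz away from the origin, combining this lower bound with the bound $\max_{i\in\cG}\|\Xihat_i - O\Xi_i\| \leq \eta$ of Theorem~\ref{thm:dist} (where $\eta \asymp (\delta_X+\sqrt\epsilon+1/\sqrt{C_\theta})/\sqrt{n}$) and $\|\Xihat_i\| \geq \|\Xi_i\| - \eta \gtrsim 1/\sqrt{n}$ yields a bound on the normalized rows in which the $1/\sqrt n$ cancels:
\[
\max_{i\in\cG}\,\bigl\|\Rhat_i - O\,m_{\ell(i)}\bigr\| \;\leq\; \tau, \qquad \tau := C'\bigl(\delta_X+\sqrt\epsilon+1/\sqrt{C_\theta}\bigr).
\]
Choosing the thresholds $\delta_0,\epsilon_0$ small and $C_\theta$ large forces $\tau < \Delta_0/4$, so every good row lies in a radius-$\tau$ ball around one of $K$ mutually $\Delta_0$-separated rotated centers.

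Finally, the $k$-means step. Here I would show the global $k$-means optimum places its $K$ empirical centers $\hat c_1,\dots,\hat c_K$ within $\Delta_0/4$ of the $K$ rotated true centers under some matching permutation $\pi$. Granting this, for $i\in\cG$ with $\ell(i)=k$ the triangle inequality gives $\|\Rhat_i - \hat c_{\pi(k)}\| \leq \tau + \Delta_0/4 < 3\Delta_0/4 - \tau \leq \|\Rhat_i - \hat c_{\pi(k')}\|$ for every $k'\neq k$, so $i$ is assigned to the correct cluster, which is exactly the stated conclusion; the error rate $|\cG^c|/n = \epsilon$ follows since only nodes outside $\cG$ can be mislabeled.

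The hard part will be this last step: certifying that each empirical $k$-means center is close to a distinct true center despite the at-most-$\epsilon n$ rows indexed by $\cG^c = \calS\cap\calM$, whose normalized values are uncontrolled and could be placed adversarially. The leverage is that these rows are few and that every $\Rhat_i$ is a unit vector, so each contributes $O(1)$ to the within-cluster sum of squares; comparing the optimal objective against the oracle assignment (good nodes in their true clusters, bad nodes absorbed into nearest clusters) bounds the optimal objective by $O(n\tau^2 + |\cG^c|)$, which is too small to let any empirical center drift by more than $\Delta_0/4$ or to let a spurious cluster form. It is precisely the row-wise (rather than Frobenius) guarantee of Theorem~\ref{thm:dist} that upgrades the usual misclassification-rate bound into \emph{per-node} exact recovery on $\cG$.
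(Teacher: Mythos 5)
Your proposal is correct and follows essentially the same route as the paper's proof (Theorem~\ref{thm:general2} in the supplement): normalize the rows using the uniform lower bound on $\|\Xi_i\|$ so the $1/\sqrt n$ cancels, verify constant separation of the normalized community centers, bound the optimal $k$-means objective by comparison with the oracle assignment (absorbing the $O(1)$ contribution of each of the $\epsilon n$ uncontrolled rows in $\cG^c$) to pin each empirical center within a fraction of the separation of a distinct true center, and then conclude per-node correctness by a nearest-center argument on $\cG$. The only cosmetic difference is that the paper phrases the final step as a swap-one-node contradiction against $k$-means optimality rather than a direct triangle-inequality comparison, which is the same argument.
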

In summary, under reasonable conditions, with a high probability, our method can exactly recover the label of every node in $\cG$.

\subsection{Consistency of spectral clustering on generalized network-adjusted covariates}\label{sec:ainclude}
To achieve strong consistency, we require $\Etilde(X)$ to have rank $K$. In Algorithm \ref{alg2}, we consider the case that $rank(\Etilde(X)) < K$ and add $AA'$ to $YY'$ to achieve good clustering results. In this section, we demonstrate the consistency of this algorithm.

Consider the multiscale network with covariates. Nodes in $\calD$ are labeled by the network information, for which we do not expect conditions on $X$. Nodes in $\calS$ are labeled by the covariate information, and distinction is required on these $x_i$'s. Hence, we relax condition (iii) to be (iii') on only the sparse communities, as follows: 
\begin{enumerate}
\item[(iii')] Let $K_{\calS}$ be the number of sparse communities. There is a constant $c > 0$, so that  $\lambda_{K_{\calS}}\{I_{\calS}\Etilde(X)\} \geq c\sqrt{n}R$. 
\end{enumerate}

To demonstrate consistency of Alg. \ref{alg2}, we first introduce $\OXtilde$ as the counterpart of $\Omega$ in the generalized case. The contribution from $AA'$ is summarized in a diagonal matrix $T = (\Pi'\Theta \Pi)^{-1}(n\Pi'\Theta^2 \Pi)^{1/2}\in \RR^{K_{\calD} \times K_{\calD}}$, where each entry is the $\ell_2$ norm divided by $\ell_1$ of the degree heterogeneity vector restricted to one community. Let $T_k$ denote the row of $T$ corresponding to community $k$. 
Define new ``extended covariates" as follows
\[
\tilde{x}_i = \left\{\begin{array}{ll}
 (x_i, \sqrt{\beta}T_{\ell(i)}),    &  i \in \calD,\\
 (x_i, 0),    &  i \notin \calD.
\end{array}
\right.
\]
Denote the matrix of extended covariates as $\tilde{X}$, and correspondingly $\Etilde(\tilde{X})$. 

Therefore, we have $\OXtilde$ as 
\be\label{eqn:oraclegen}
\OXtilde = \bigl\{I_{\calD} E(A) I_{\calD} + I_{\calS} D_{\alpha^*}\bigr\}\Etilde(\tilde{X}).
\ee
The definition of $\OXtilde$ is almost the same with $\OX$, except replacing $\Etilde(X)$ to the extended covariates $\Etilde(\tilde{X})$. 
For $\OXtilde$, there is $\OXtilde\OXtilde'=\OX\OX'+\beta n\{I_{\calD} E(A) I_{\calD}\}^2$, which concludes the oracle matrix corresponding to $YY'$ and the dense communities in $\beta n AA'$.

Following the same proof, we have a generalized version of Lemma \ref{lem:xiformsimple} for $\OXtilde$, where the spectral matrix of $\OXtilde\OXtilde'$ have identical rows for nodes in the same community, up to a constant factor. We further investigate the difference between $\Xihat$ and $\Xi$ in terms of the Frobenius norm. Based on it, we prove the consistency of community detection by using $\Xihat$.

\begin{theorem}\label{thm:dist2}[Consistency of Algorithm \ref{alg2}]
Consider the degree-corrected stochastic blockmodel with covariates with parameters $(\Theta, K, P, \Pi, F_{[K]}, \calM)$.
Let $\Xihat$ be the matrix formed by top $K$ eigenvectors of $YY' + \beta n AA'$ in Algorithm \ref{alg2} and $\Xi$ be the eigenvectors of $\OXtilde \OXtilde'$. 
Suppose conditions (i), (ii), (iv) in Theorem \ref{thm:dist} and (iii') hold. 
Then there are threshold constants   $C_\theta$, $\epsilon_0, n_0$, and $\delta_0$,  so that if $\delta_X \leq \delta_0, \epsilon\leq \epsilon_0, n\geq n_0, n\theta_{\max}^2 \geq C_\theta \log n$, \blue{there are constants $\beta_0$ and $\beta_1$, so that when $\beta_0 < \beta < \beta_1$,} with probability $1 - O(1/n)$, 
there exists an orthogonal matrix $O$ and a constant $C > 0$ that 
\[
\|\Xihat - O\Xi \|_F\leq C(\delta_X+\sqrt{\epsilon} + 1/\sqrt{C_{\theta}}).
\]
Let $\delta_{net} = \frac{\max_{i \in \calS} n\theta_i\theta_{\max}}{\min_{i \in \calD} n\theta_i\theta_{\max}}$. 
Let $Err_n = n^{-1}\min_{\pi: [K]\to[K]}|\{i:\pi(\lhat(i))\neq \ell(i)\}|$. 
Then there exists a permutation $\pi$, so that with probability $\q$, the clustering error rate by Alg. \ref{alg2} follows 
\[
Err_n \leq C(\delta_X +1/\sqrt{C_\theta} + \delta_{net} + \sqrt{\epsilon}). 
\]
\end{theorem}
\begin{remark}
Theorem \ref{thm:dist2} gives the weak consistency results of Algorithm \ref{alg2}.    While both algorithms share a similar format of the oracle matrix, the noise caused by $AA'$ is relatively large. 
It comes from the nature of Bernoulli distribution. The large noise causes the row-wise control on the empirical singular matrix to be hardly available, and hence strong consistency not achievable. 
It further provides the importance of introducing covariates into community detection. 
\end{remark}
\begin{remark}
    The error control contains three parts: $\delta_X + 1/\sqrt{C_{\theta}}$ from the covariates, $\delta_{net}$ from the network, and $\sqrt{\epsilon}$ from the mis-specification. 
    For a consistent signal recovery, we require the covariates have relatively small deviations, the network is sufficiently dense and the sparse communities are well-separated from the dense communities, and there are no mis-specifications. 
\end{remark}
\begin{remark}
    \blue{The tuning parameter $\beta$ is a fixed constant. The requirement on $\beta$ depends on $K_{\calS}$, i.e. the number of sparse communities. 
    When $K_{\calS} = 0$, the information is in $AA'$, so larger $\beta$ is always preferred. Delicate analysis shows that $\beta \gtrsim cR^2$ is required, where $c$ is a small constant based on $\delta_X, \delta_{net}, C_{\theta}, \epsilon$ and $P$. When $K_{\calS} > 0$, $\beta$ must be in an interval $(\beta_0, \beta_1)$ so that $\beta n AA'$ is sufficiently large to detect dense communities, and not too large so that $Y$ still works for the sparse communities. 
    In supplementary materials, we find that the detailed requirement on $\beta_0$ and $\beta_1$, where both can be represented as $c\lambda_{K_{\calS}}^2[I_{\calS}\{\Etilde(X)\}]/n \approx cR^2$ by (iii'). For both cases, $\beta \approx cR^2$ works.}
\end{remark}

\subsection{Statistical lower bound on networks with covariates}\label{sec:lowerbound}
In this section, we demonstrate the information bound of the community detection problem on multiscale networks with covariates. We focus on the exact recovery problem. Our goal is to find the region of $\theta_i$ and $F_k$, in which all estimators will fail.

To capture the effects of $\theta_i$ and $F_k$ for a multiscale network, we need at least 3 communities: one of them is relatively dense and the other two are extremely sparse. If there is only one sparse community and one dense community, then the labels can be recovered by the degree distribution, which is not of interest.  

Consider a simplified model $SM(\theta_0, \theta_{\max}, P, \mu_{[K]}, \sigma)$ with $K = 3$ communities. Under this model, nodes fall into each community equally likely. Further, there are only two possible values of $\theta_i \in \{\theta_0, \theta_{\max}\}$. 
Nodes in communities 1 and 2 have $\theta_i = \theta_0$ and nodes in community 3 have $\theta_i = \theta_{\max}$. Therefore, community 3 will be a dense community and communities 1 and 2 have the flexibility to be either dense or sparse. 
The covariates follow $x_i \sim \mathcal{N}(\mu_{\ell(i)}, \sigma^2 I_p)$. Here, $\sigma^2$ is to capture the deviation from the mean vector. 
We find the upper bound decided by our new algorithm and the statistical lower bound under the model $SM(\theta_0, \theta_{\max}, P, \mu_{[3]}, \sigma_n)$. 

\begin{theorem}[Statistical Lower Bound]\label{thm:lowerboundnew} 
Consider the simplified model with $K = 3$ communities, denoted $SM(\theta_0, \theta_{\max}, P, \mu_{[3]}, \sigma_n)$. 

There is a constant $C > 0$ and a constant $c_p$ on $P$, so that 
if (i) $n\theta_0\theta_{\max} < Cc_p\log n$ and (ii) $\|\mu_2 - \mu_1\|/\sigma_n < \sqrt{C\log n}$, 
then for any estimator $\hat{\ell}$, 
\[
E\{Err_n(\hat{\ell}, \ell)\} \geq 1/n, 
\]
i.e. the exact recovery cannot be achieved.
\end{theorem}
By Theorem \ref{thm:lowerboundnew}, to achieve the exact recovery, the communities either have a diverging degree as (i), or have well-separated covariates as (ii). When neither is met, all estimators will fail. 
\begin{remark}[Connection to the upper bound]
Compare the lower bounds in Theorem \ref{thm:lowerboundnew} with the upper bounds in Theorem \ref{thm:rate}. The exact recovery can be guaranteed when all communities are dense or all nodes in the sparse communities have covariates $\|x_i - \Etilde(x_i)\|\leq \delta_X R$ for a small constant $\delta_X$. Under $SM(\theta_0, \theta_{\max}, P, \mu_{[3]}, \sigma_n)$, it means either $n\theta_0 \theta_{\max} \geq C\log n$ or $\|\mu_2 - \mu_1\|/\sigma_n \geq C \sqrt{\log n}$.
Therefore, the upper bound in Theorem \ref{thm:rate} meets the lower bound in Theorem \ref{thm:lowerboundnew}, up to a constant factor.  It supports the optimality of our  \names approach.    
\end{remark}

\section{Simulation}\label{sec:simulation}
Consider a large network with $n = 1200$ nodes and the number of covariates can be either small ($p = 20$) or large ($p = 600$). We conduct 3 sets of simulation studies that focus on the effects on the estimation when: (1) the network changes; (2) the signal strength changes; and (3) the proportion of mis-specified nodes. 

We set up the baseline degree-corrected stochastic blockmodel with covariates as follows. 
The network has $K = 4$ communities and $\ell(i) = k$ with equal probability for $k = 1,2,3,4$. The first two are dense communities with $\theta_i \sim Unif(0.3, 0.5)$ and the last two are sparse communities with $\theta_i \sim Unif(0.03, 0.05)$. The intensity matrix $P$ is 
\be
P = \left(\begin{array}{llll}
1  & \alpha \quad & \alpha & \alpha\\
\alpha \quad & 1 & \alpha & \alpha\\
\alpha \quad & \alpha \quad & 1 & \alpha\\
 \alpha\quad & \alpha \quad & \alpha \quad & 1
\end{array}
\right).
\ee

The covariates follow a mixture of five distributions $F_k$, $k = 1,2,3,4,5$. Here $F_5$ only appears for some of the mis-specified nodes.
In detail, the covariates of node $i$ follows the mixture distribution $x_i \sim (1-\gamma) F_{\ell(i)} + \sum_{k= 1, k\neq \ell(i)}^5 (\gamma/4) F_{k}$. 
Hence, $\gamma$ represents the fraction of mis-specified nodes. Half of these mis-specified nodes belong to dense communities 1 and 2, so the ``bad nodes" proportion is about $\epsilon = |\cG^c|/n = \gamma/2$.

We set $F_k \sim \mathcal{N}(m_k, I_p)$ for the covariates and discuss the setting of mean vectors $m_k$. 
In the large covariates case, $p = 600$, most covariates are noise with mean $0$. 
We first select $5\%$ of the $p$ covariates as the ``useful" covariates and then set the mean for those covariates as $m_k(j) \sim \mu_1*Bernoulli(1/2)$ independently. Therefore, $m_k(j)$ is different across communities $k$, which carries the label information. The leftover covariates have mean $m_k(j) = 0$. 
In the small covariates case, $p = 20$.
Let $m_k(j) \sim \mu_2 + 0.1*Bernoulli(0.5)$ when $j \in \{5k-1,5k-2,5k-3,5k-4\}$ and $m_k(j) \sim 0.1*Bernoulli(0.5)$ otherwise.


\begin{figure}
    \includegraphics[width = 1.0\textwidth]{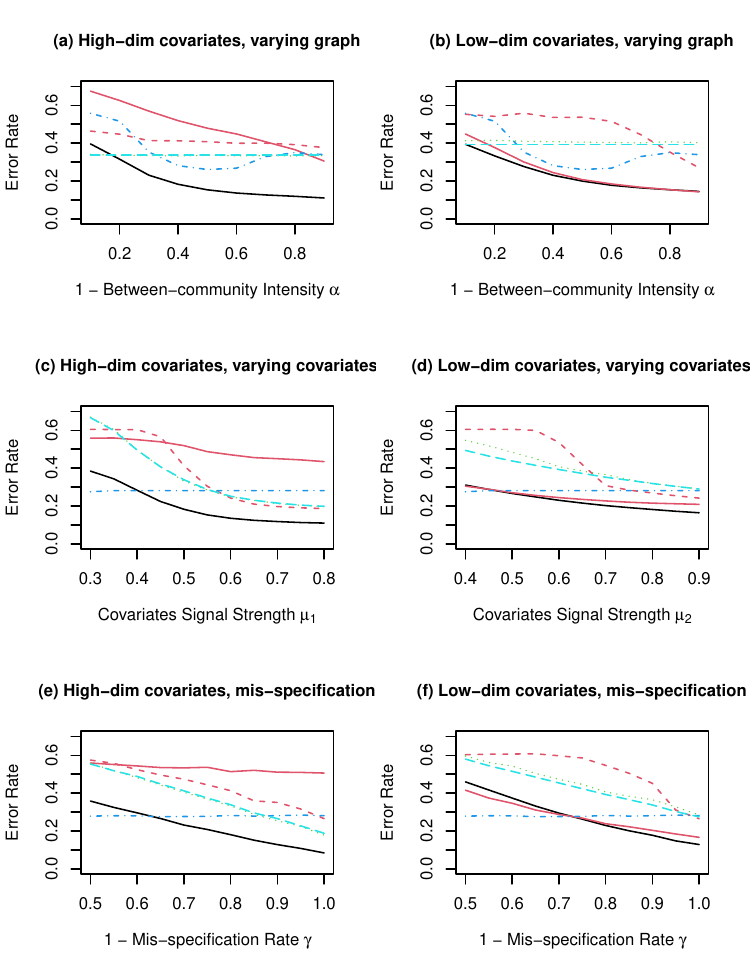}
\caption{The average clustering error rate among 50 repetitions of 6 community detection methods: (1) our method (black solid); (2) our method on generalized covariates (red solid); (3) covariate-assisted Laplacian (dashed); (4) semidefinite programming (dotted); (5) spectral clustering on regularized Laplacian (dotdash); and (6) spectral clustering on the covariate matrix (longdash). 
The fixed parameters are $n = 1200$,  $\alpha = 0.4$, $\mu_1 = 0.5$, $\mu_2 = 0.8$, and $\gamma = 0.2$. }\label{fig:exp}
\end{figure}

For each simulation study, we compare six different methods. The first four are popular community detection methods based on both network and covariates: our newly proposed 
spectral clustering on network-adjusted covariates method by Algorithm \ref{alg1}, our new algorithm on generalized covariates by Algorithm \ref{alg2},
the covariate-assisted spectral clustering in \cite{casc}, and the semi-definite programming in \cite{attrisparse}. We also consider spectral method on the network only, in particular the spectral clustering with regularized Laplacian method in \cite{joseph2016impact}.
To examine the effects using only the covariate matrix, the last method is the spectral clustering method on $XX'$ in \cite{specgem}.

The first simulation study is on the community by community connection intensity matrix $P$. Let the between-community intensity $\alpha \in [0.1, 0.9]$. 
A smaller $\alpha$ will cause larger differences between the within-community connections and between-community connections, so that the community detection is easier. On the other hand, a very small $\alpha$ indicates many low-degree nodes in the sparse communities. Hence, there is a sweet spot of $\alpha$ when the community detection is purely based on the network, as shown in Fig. \ref{fig:exp}(a) and (b). 
Our approach outperforms all other methods in most cases, except the semi-definite programming method and covariate-based method when the network information is very few.

The second simulation study is on the signal strength in covariates. We set the signal strength $\mu_1 \in [0.3, 0.8]$ for $p = 600$ and $\mu_2 \in [0.5, 1.0]$ for $p = 20$. Fig. \ref{fig:exp}(c) and (d) record the results for $p = 600$ and $p = 20$, respectively. 
In general, increasing the signal strength can improve the community detection results for all methods including covariates. Among all methods, the new approach always performs the best, while the error rates of all other methods on covariates are at least $0.19$ that may come from the $20\%$ mis-specified nodes.  

In the third simulation study, we focus on the mis-specification rate $\gamma$. Let $\gamma \in [0, 0.5]$ and the results are in Fig. \ref{fig:exp}(e) and (f).
When $\gamma$ is large and the network adjacency conveys strong information (the low-dimensional case), the network-based method and covariate-assisted Laplacian work better. The reason is large mis-specification rate will cause centers of community covariates close to each other. 
When $\gamma$ decreases, our method works best. Especially, the error rate of our method decreases at the rate of $\gamma/2$, which is roughly $\epsilon = |\cG^c|/n$.
It provides numerical support that our method fails on the mis-specified nodes with low degrees.

\section{Real World Networks}\label{sec:data}

 \subsection{Error rates on the LastFM network}
 LastFM Asian dataset is a social network of LastFM app users. This dataset was collected and cleaned by \citet{LastFM} from the public API in March 2020. 
 The nodes are the app users from 18 unspecified Asian countries and the connections between them are identified by the mutual friendship. In pre-processing, a small country with only 17 users is removed because of insufficient information and 17 countries are left. Each user has a list of liked artists as covariates. The goal is to estimate the country membership for each node. 

We consider 4 sub-datasets from the original dataset: small country dataset (each with less than 100 users), medium country dataset (each with users between 100 and 300), large-sized country dataset (each with users between 300 and 1000) and huge country dataset (each with more than $1000$ users). 
Each dataset has communities with similar sizes, i.e. non-degenerate communities that is required by most existing community detection methods. 
For each dataset, we first select the ``regional popular" artists by examining the artists with largest proportion of fans in the countries of interest, and then apply five community detection methods.

\begin{table}
\def~{\hphantom{0}}
{%
\begin{tabular}{lcccccccccc}
 \\
 & $n$ & $p$ & $K$ & $\bar{d}$ & New & New (Generalized) & CAL &SDP & Net-based & Cov-based \\[5pt]
{Small countries} & 343 & 194 & 6 & 4.9 &  .236 & .262 & { .178} & .510  & .350 & .557   \\
Medium countries & 612 & 324 & 3  & 7.2 & .041    &  {.031}  & .044 & .342 & .165 & .234 \\
Large countries & 2488 & 600 & 5  & 6.2 & {.249} & .424 & .371    & .519 & .482 &.468  \\
Huge countries & 3691 & 600 & 3  & 7.1 & {.019} & .022 & .019    & .416 & 
.240 & .328
\end{tabular}}
\label{tab:LastFM}
\caption{
$n$, number of nodes; $p$, number of covariates; $K$, number of communities; $\bar{d}$, the average degree. New, our new approach; New (Generalized), our new approach on generalized covariates; CAL, covariate-assisted Laplacian method; SDP, semidefinite programming method; Net-based, spectral clustering on regularized Laplacian matrix; Cov-based, spectral clustering on the covariate matrix.}
\end{table}

The sizes of networks and covariates are summarized in Table \ref{tab:LastFM}, together with the average degree and community detection error rates of the methods. 
Our new method outperforms all other methods on 3 out of 4 datasets.
For the covariate-assisted Laplacian method, we selected the optimal tuning parameter among 5 choices according to \cite{casc}.  
Our network-adjusted covariate based community detection method does not need any tuning parameter, with comparative clustering error rates. 

In Table \ref{tab:LastFM}, the average degree is stable when the network size changes from 343 to 3691. This is often seen in real networks, and it motivates us to investigate sparse networks. 
When the network sizes are small and the average degree is relatively large, the Net-based method performs well except for the low-degree nodes. Combining it with the covariates further reduces such errors; see our new methods and covariate-assisted Laplacian method. Here, the covariate-assisted Laplacian method outperforms ours for the small countries data set, because this dataset has severe degree heterogeneity within the same community and uninformative covariates.  
When it comes to large/huge countries, the network is relatively sparse. Our new method performs the best on these two datasets.

\subsection{Community detection on statisticians' citation network}\label{sec:citation}
The citation network was published in \citet{network}. It contains 3232 papers published in the Annals of Statistics, Journal of American Statistical Association, Journal of Royal Statistical Society (Series B) and Biometrika from 2003 to the first half of 2012. Each paper is a node and two papers are connected if they both cite the same paper or are both cited by another paper. The covariate of each paper is its abstract. 
This citation network has $n = 3232$ nodes and $p = 4095$ covariates.

In this network, there is a giant component of 2179 nodes. Among the leftover 1053 nodes, 957 are isolated and 96 nodes are in small components with the largest size of 9. 
When it was introduced in \citet{network}, the authors applied spectral clustering on ratios-of-eigenvectors in \citet{SCORE} to the network. This method requires the network to be connected, hence only the giant component was analyzed. It suggests interpretable results on the high-degree nodes, yet the 1053 nodes outside the giant component were not classified. 

We apply our new method, the variant of our new method on the generalized covariates, covariate-assisted Laplacian in \cite{casc}, semidefinite programming method in \cite{attrisparse}, and the network-based method in \cite{joseph2016impact}. For all the methods, we take $K = 5$ and record the community sizes in Table \ref{tab:citation1}. 
For the low-degree nodes, the network-based method turns to classify all of them into the largest community, which is unrealistic. This issue still exists but is slightly resolved for our method on the generalized covariates and the covariate-assisted Laplacian method, because of their dependency on the network. Our original method (Alg. \ref{alg1}) and the semidefinite programming method have a more reasonable and balanced splitting.

\begin{table}[!htbp]
\def~{\hphantom{0}}
{%
\begin{tabular}{llllll}
 &  Community 1 & Community  2 & Community  3 & Community  4 & Community  5\\
New Method  & 1105(345)  & 1062(386)  & 483(181)  & 325(120) & 257(21)  \\
New (Alg. \ref{alg2}) & 2433(1032)  &  228(8) &  225(1)  & 180(12)  & 166(0)  \\
SDP & 783(271)  & 720(226)  & 665(200)   &  660(217)  & 404(139) \\
CAL & 1892(1047) & 471(4)  & 357(0)   & 319(1)  & 193(1)  \\
Net-Based & 2280(1053)  & 297(0)  & 283(0)  & 221(0) & 151(0)  \\
\end{tabular}}
\label{tab:citation1}
\caption{
Estimated community sizes and the number of the leftover 1053 nodes in each community (in the bracket).}
\end{table}

We then investigate the splitting of the giant component. We measure the agreement of each pair of clustering results by the normalized mutual information (NMI) in \citet{nmi}. A larger NMI score means the two clustering results are more coherent. 
The heatmap of pairwise NMI scores is in Fig. \ref{fig:citation}. Considering the giant component only, our new method agrees with the network-based method, while the semidefnite programming method doesn't agree with the network-based method at all. 
Combining the giant component and the low-degree nodes, our method provides the best splitting.

We compare the communities found by our spectral clustering on network-adjusted covariates method with the communities found in \cite{network}. 
For each community, we check the top 10 popular papers and corpus. The results can be interpreted as: Variable selection (regression) community; Variable selection (semi-parametric) community; Large-scale multiple testing community; Biostatistics community; and Bayesian community. Compared to the estimated communities of statisticians in \citet{network}, three communities are coherent: large-scale multiple testing community, biostatistics community, and Bayesian community (non-parametric community in \citet{network}). The variable selection community in their work has been decomposed into two communities by our new method, one about regression and one about the semi-parametric models. In Fig. \ref{fig:citation}, we can see the regression community and the semi-parametric community are densely connected but have an even denser connection within communities.

\begin{figure}
    \includegraphics[width = 1.0\textwidth]{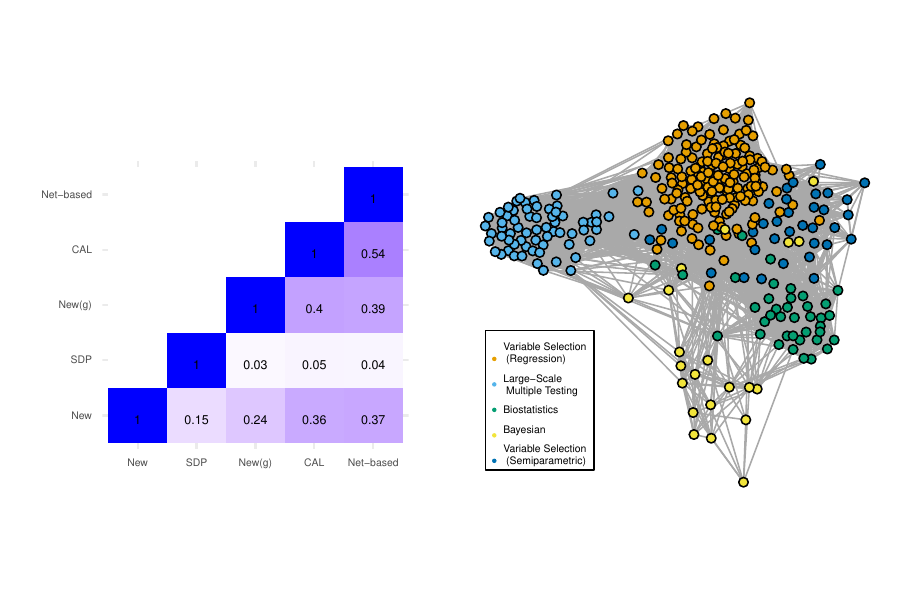}
\caption{Left: The heatmap of NMI scores between 5 community detection methods on the giant component. Right: A sub-network consisting of nodes with at least 50 neighbors. Each color denotes one estimated community.}\label{fig:citation}
\end{figure}


The interpretation of communities  works on the isolated nodes. 
We randomly checked two papers without any connections. Node 2893 titled ``Bayesian pseudo-empirical-likelihood intervals for complex surveys \citep{rao2010bayesian}" is classified into the Bayesian community. Node 2481 titled ``Testing dependence among serially correlated multicategory variables \citep{pesaran2009testing}" is classified into the Multiple Testing community. 
These two examples suggest our method reasonably clustered isolated nodes.

\section*{Acknowledgement}
This research was supported by Singapore Ministry of Education Academic Research Fund Tier 1 A-00004813-00-00.
The authors thank Purnamrita Sarkar and Bowei Yan for generously sharing their code and thank Xin T. Tong, Yang Feng, Ramon van Handel, Jiashun Jin, Liza Levina, Anru Zhang and the anonymous referees for their valuable comments and discussions.
\section*{Supplementary material}
\label{SM}

Supplementary material available at \Bka\ online includes additional results on the statistician citation network, theoretical proofs of theorems and additional theoretical results. R code is available at \url{https://github.com/YaofangHuYaofang/NAC}.

\vspace*{-10pt}

\bibliographystyle{biometrika}
\bibliography{ref}

\begin{thebibliography}{50}
\expandafter\ifx\csname natexlab\endcsname\relax\def\natexlab#1{#1}\fi

\bibitem[{Abbe(2017)}]{SBMlower}
\textsc{Abbe, E.} (2017).
\newblock Community detection and stochastic block models: recent developments.
\newblock \textit{The Journal of Machine Learning Research} \textbf{18},
  6446--6531.

\bibitem[{Abbe et~al.(2015)Abbe, Bandeira \& Hall}]{abbe2015exact}
\textsc{Abbe, E.}, \textsc{Bandeira, A.~S.} \& \textsc{Hall, G.} (2015).
\newblock Exact recovery in the stochastic block model.
\newblock \textit{IEEE Transactions on information theory} \textbf{62},
  471--487.

\bibitem[{Abbe et~al.(2022)Abbe, Fan \& Wang}]{fanPCA}
\textsc{Abbe, E.}, \textsc{Fan, J.} \& \textsc{Wang, K.} (2022).
\newblock An $\ell_p$ theory of pca and spectral clustering.
\newblock \textit{The Annals of Statistics} \textbf{50}, 2359--2385.

\bibitem[{Abbe et~al.(2020)Abbe, Fan, Wang \& Zhong}]{abbe2020entrywise}
\textsc{Abbe, E.}, \textsc{Fan, J.}, \textsc{Wang, K.} \& \textsc{Zhong, Y.}
  (2020).
\newblock Entrywise eigenvector analysis of random matrices with low expected
  rank.
\newblock \textit{Annals of statistics} \textbf{48}, 1452.

\bibitem[{Amini et~al.(2013)Amini, Chen, Bickel \& Levina}]{SC1}
\textsc{Amini, A.~A.}, \textsc{Chen, A.}, \textsc{Bickel, P.~J.} \&
  \textsc{Levina, E.} (2013).
\newblock Pseudo-likelihood methods for community detection in large sparse
  networks.
\newblock \textit{The Annals of Statistics} \textbf{41}, 2097--2122.

\bibitem[{Bickel \& Chen(2009)}]{DCBickel}
\textsc{Bickel, P.~J.} \& \textsc{Chen, A.} (2009).
\newblock A nonparametric view of network models and newman--girvan and other
  modularities.
\newblock \textit{Proceedings of the National Academy of Sciences}
  \textbf{106}, 21068--21073.

\bibitem[{Binkiewicz et~al.(2017)Binkiewicz, Vogelstein \& Rohe}]{casc}
\textsc{Binkiewicz, N.}, \textsc{Vogelstein, J.~T.} \& \textsc{Rohe, K.}
  (2017).
\newblock Covariate-assisted spectral clustering.
\newblock \textit{Biometrika} \textbf{104}, 361--377.

\bibitem[{Chaudhuri et~al.(2012)Chaudhuri, Chung \& Tsiatas}]{opca}
\textsc{Chaudhuri, K.}, \textsc{Chung, F.} \& \textsc{Tsiatas, A.} (2012).
\newblock Spectral clustering of graphs with general degrees in the extended
  planted partition model.
\newblock In \textit{Conference on Learning Theory}. JMLR Workshop and
  Conference Proceedings.

\bibitem[{Chen \& Yuan(2006)}]{chen2006detecting}
\textsc{Chen, J.} \& \textsc{Yuan, B.} (2006).
\newblock Detecting functional modules in the yeast protein--protein
  interaction network.
\newblock \textit{Bioinformatics} \textbf{22}, 2283--2290.

\bibitem[{Chen et~al.(2021)Chen, Chi, Fan, Ma et~al.}]{chen2021spectral}
\textsc{Chen, Y.}, \textsc{Chi, Y.}, \textsc{Fan, J.}, \textsc{Ma, C.} et~al.
  (2021).
\newblock Spectral methods for data science: A statistical perspective.
\newblock \textit{Foundations and Trends{\textregistered} in Machine Learning}
  \textbf{14}, 566--806.

\bibitem[{Chung \& Graham(1997)}]{npca}
\textsc{Chung, F.~R.} \& \textsc{Graham, F.~C.} (1997).
\newblock \textit{Spectral graph theory}, vol.~92.
\newblock American Mathematical Soc.

\bibitem[{Deco \& Corbetta(2011)}]{deco2011dynamical}
\textsc{Deco, G.} \& \textsc{Corbetta, M.} (2011).
\newblock The dynamical balance of the brain at rest.
\newblock \textit{The Neuroscientist} \textbf{17}, 107--123.

\bibitem[{Deshpande et~al.(2018)Deshpande, Sen, Montanari \&
  Mossel}]{contextualsbm}
\textsc{Deshpande, Y.}, \textsc{Sen, S.}, \textsc{Montanari, A.} \&
  \textsc{Mossel, E.} (2018).
\newblock Contextual stochastic block models.
\newblock \textit{Advances in Neural Information Processing Systems}
  \textbf{31}.

\bibitem[{Fan et~al.(2018)Fan, Wang \& Zhong}]{fan2018eigenvector}
\textsc{Fan, J.}, \textsc{Wang, W.} \& \textsc{Zhong, Y.} (2018).
\newblock An $\ell_{\infty}$ eigenvector perturbation bound and its application
  to robust covariance estimation.
\newblock \textit{Journal of Machine Learning Research} \textbf{18}, 1--42.

\bibitem[{Gao et~al.(2017)Gao, Ma, Zhang \& Zhou}]{gao2017achieving}
\textsc{Gao, C.}, \textsc{Ma, Z.}, \textsc{Zhang, A.~Y.} \& \textsc{Zhou,
  H.~H.} (2017).
\newblock Achieving optimal misclassification proportion in stochastic block
  models.
\newblock \textit{The Journal of Machine Learning Research} \textbf{18},
  1980--2024.

\bibitem[{Gil-Mendieta \& Schmidt(1996)}]{gil1996political}
\textsc{Gil-Mendieta, J.} \& \textsc{Schmidt, S.} (1996).
\newblock The political network in mexico.
\newblock \textit{Social Networks} \textbf{18}, 355--381.

\bibitem[{Hartigan \& Wong(1979)}]{kmeans}
\textsc{Hartigan, J.~A.} \& \textsc{Wong, M.~A.} (1979).
\newblock Algorithm as 136: A k-means clustering algorithm.
\newblock \textit{Journal of the royal statistical society. series c (applied
  statistics)} \textbf{28}, 100--108.

\bibitem[{Holland et~al.(1983)Holland, Laskey \& Leinhardt}]{SBM}
\textsc{Holland, P.~W.}, \textsc{Laskey, K.~B.} \& \textsc{Leinhardt, S.}
  (1983).
\newblock Stochastic blockmodels: First steps.
\newblock \textit{Social networks} \textbf{5}, 109--137.

\bibitem[{Hu et~al.(2022)Hu, Wang \& Yu}]{hu2022graph}
\textsc{Hu, Y.}, \textsc{Wang, W.} \& \textsc{Yu, Y.} (2022).
\newblock Graph matching beyond perfectly-overlapping erd{\H{o}}s--r{\'e}nyi
  random graphs.
\newblock \textit{Statistics and Computing} \textbf{32}, 1--16.

\bibitem[{Huang \& Feng(2018)}]{FengEdge}
\textsc{Huang, S.} \& \textsc{Feng, Y.} (2018).
\newblock Pairwise covariates-adjusted block model for community detection.
\newblock \textit{arXiv preprint arXiv:1807.03469} .

\bibitem[{Jacob et~al.(2011)Jacob, Thierry, Brose, Arntz, Berg, Brey, Fetzer,
  Jonsson, Mintenbeck, M{\"o}llmann et~al.}]{jacob2011role}
\textsc{Jacob, U.}, \textsc{Thierry, A.}, \textsc{Brose, U.}, \textsc{Arntz,
  W.~E.}, \textsc{Berg, S.}, \textsc{Brey, T.}, \textsc{Fetzer, I.},
  \textsc{Jonsson, T.}, \textsc{Mintenbeck, K.}, \textsc{M{\"o}llmann, C.}
  et~al. (2011).
\newblock The role of body size in complex food webs: A cold case.
\newblock \textit{Advances in ecological research} \textbf{45}, 181--223.

\bibitem[{Ji \& Jin(2016)}]{network}
\textsc{Ji, P.} \& \textsc{Jin, J.} (2016).
\newblock Coauthorship and citation networks for statisticians.
\newblock \textit{The Annals of Applied Statistics} \textbf{10}, 1779--1812.

\bibitem[{Jin(2015)}]{SCORE}
\textsc{Jin, J.} (2015).
\newblock Fast community detection by score.
\newblock \textit{The Annals of Statistics} \textbf{43}, 57--89.

\bibitem[{Jin et~al.(2017)Jin, Ke \& Wang}]{jin2017phase}
\textsc{Jin, J.}, \textsc{Ke, Z.~T.} \& \textsc{Wang, W.} (2017).
\newblock Phase transitions for high dimensional clustering and related
  problems.
\newblock \textit{The Annals of Statistics} \textbf{45}, 2151--2189.

\bibitem[{Jin \& Wang(2016)}]{jin2016influential}
\textsc{Jin, J.} \& \textsc{Wang, W.} (2016).
\newblock Influential features pca for high dimensional clustering.
\newblock \textit{The Annals of Statistics} \textbf{44}, 2323--2359.

\bibitem[{Joseph \& Yu(2016)}]{joseph2016impact}
\textsc{Joseph, A.} \& \textsc{Yu, B.} (2016).
\newblock Impact of regularization on spectral clustering.
\newblock \textit{The Annals of Statistics} \textbf{44}, 1765--1791.

\bibitem[{Karrer \& Newman(2011)}]{DCSBM}
\textsc{Karrer, B.} \& \textsc{Newman, M.~E.} (2011).
\newblock Stochastic blockmodels and community structure in networks.
\newblock \textit{Physical review E} \textbf{83}, 016107.

\bibitem[{Krzakala et~al.(2013)Krzakala, Moore, Mossel, Neeman, Sly,
  Zdeborov{\'a} \& Zhang}]{krzakala2013spectral}
\textsc{Krzakala, F.}, \textsc{Moore, C.}, \textsc{Mossel, E.}, \textsc{Neeman,
  J.}, \textsc{Sly, A.}, \textsc{Zdeborov{\'a}, L.} \& \textsc{Zhang, P.}
  (2013).
\newblock Spectral redemption in clustering sparse networks.
\newblock \textit{Proceedings of the National Academy of Sciences}
  \textbf{110}, 20935--20940.

\bibitem[{Lee et~al.(2010)Lee, Luca, Klei, Devlin \& Roeder}]{specgem}
\textsc{Lee, A.~B.}, \textsc{Luca, D.}, \textsc{Klei, L.}, \textsc{Devlin, B.}
  \& \textsc{Roeder, K.} (2010).
\newblock Discovering genetic ancestry using spectral graph theory.
\newblock \textit{Genetic Epidemiology: The Official Publication of the
  International Genetic Epidemiology Society} \textbf{34}, 51--59.

\bibitem[{Lei \& Rinaldo(2015)}]{lei2015consistency}
\textsc{Lei, J.} \& \textsc{Rinaldo, A.} (2015).
\newblock Consistency of spectral clustering in stochastic block models.
\newblock \textit{The Annals of Statistics} \textbf{43}, 215--237.

\bibitem[{Lei et~al.(2020)Lei, Li \& Lou}]{lei2020consistency}
\textsc{Lei, L.}, \textsc{Li, X.} \& \textsc{Lou, X.} (2020).
\newblock Consistency of spectral clustering on hierarchical stochastic block
  models.
\newblock \textit{arXiv preprint arXiv:2004.14531} .

\bibitem[{Leskovec \& Mcauley(2012)}]{leskovec2012learning}
\textsc{Leskovec, J.} \& \textsc{Mcauley, J.} (2012).
\newblock Learning to discover social circles in ego networks.
\newblock \textit{Advances in neural information processing systems}
  \textbf{25}.

\bibitem[{Ma \& Nandy(2023)}]{ma2023community}
\textsc{Ma, Z.} \& \textsc{Nandy, S.} (2023).
\newblock Community detection with contextual multilayer networks.
\newblock \textit{IEEE Transactions on Information Theory} \textbf{69},
  3203--3239.

\bibitem[{Meil{\u{a}}(2007)}]{nmi}
\textsc{Meil{\u{a}}, M.} (2007).
\newblock Comparing clusterings—an information based distance.
\newblock \textit{Journal of multivariate analysis} \textbf{98}, 873--895.

\bibitem[{Newman \& Clauset(2016)}]{BayesianNewman}
\textsc{Newman, M.~E.} \& \textsc{Clauset, A.} (2016).
\newblock Structure and inference in annotated networks.
\newblock \textit{Nature communications} \textbf{7}, 1--11.

\bibitem[{Pesaran \& Timmermann(2009)}]{pesaran2009testing}
\textsc{Pesaran, M.~H.} \& \textsc{Timmermann, A.} (2009).
\newblock Testing dependence among serially correlated multicategory variables.
\newblock \textit{Journal of the American Statistical Association}
  \textbf{104}, 325--337.

\bibitem[{Rao \& Wu(2010)}]{rao2010bayesian}
\textsc{Rao, J.} \& \textsc{Wu, C.} (2010).
\newblock Bayesian pseudo-empirical-likelihood intervals for complex surveys.
\newblock \textit{Journal of the Royal Statistical Society: Series B
  (Statistical Methodology)} \textbf{72}, 533--544.

\bibitem[{Rohe et~al.(2011)Rohe, Chatterjee \& Yu}]{SC3}
\textsc{Rohe, K.}, \textsc{Chatterjee, S.} \& \textsc{Yu, B.} (2011).
\newblock Spectral clustering and the high-dimensional stochastic blockmodel.
\newblock \textit{The Annals of Statistics} \textbf{39}, 1878--1915.

\bibitem[{Rozemberczki \& Sarkar(2020)}]{LastFM}
\textsc{Rozemberczki, B.} \& \textsc{Sarkar, R.} (2020).
\newblock Characteristic functions on graphs: Birds of a feather, from
  statistical descriptors to parametric models.
\newblock In \textit{Proceedings of the 29th ACM international conference on
  information \& knowledge management}.

\bibitem[{Sporns \& Betzel(2016)}]{sporns2016modular}
\textsc{Sporns, O.} \& \textsc{Betzel, R.~F.} (2016).
\newblock Modular brain networks.
\newblock \textit{Annual review of psychology} \textbf{67}, 613.

\bibitem[{Su et~al.(2019)Su, Wang \& Zhang}]{su2019strong}
\textsc{Su, L.}, \textsc{Wang, W.} \& \textsc{Zhang, Y.} (2019).
\newblock Strong consistency of spectral clustering for stochastic block
  models.
\newblock \textit{IEEE Transactions on Information Theory} \textbf{66},
  324--338.

\bibitem[{Weng \& Feng(2022)}]{FengNodal}
\textsc{Weng, H.} \& \textsc{Feng, Y.} (2022).
\newblock Community detection with nodal information: likelihood and its
  variational approximation.
\newblock \textit{Stat} \textbf{11}, e428.

\bibitem[{Xu et~al.(2022)Xu, Zhen \& Wang}]{xu2022covariate}
\textsc{Xu, S.}, \textsc{Zhen, Y.} \& \textsc{Wang, J.} (2022).
\newblock Covariate-assisted community detection in multi-layer networks.
\newblock \textit{Journal of Business \& Economic Statistics} , 1--12.

\bibitem[{Yan \& Sarkar(2021)}]{attrisparse}
\textsc{Yan, B.} \& \textsc{Sarkar, P.} (2021).
\newblock Covariate regularized community detection in sparse graphs.
\newblock \textit{Journal of the American Statistical Association}
  \textbf{116}, 734--745.

\bibitem[{Yan et~al.(2019)Yan, Jiang, Fienberg \& Leng}]{InferenceCov}
\textsc{Yan, T.}, \textsc{Jiang, B.}, \textsc{Fienberg, S.~E.} \& \textsc{Leng,
  C.} (2019).
\newblock Statistical inference in a directed network model with covariates.
\newblock \textit{Journal of the American Statistical Association}
  \textbf{114}, 857--868.

\bibitem[{Yan et~al.(2014)Yan, Shalizi, Jensen, Krzakala, Moore, Zdeborov{\'a},
  Zhang \& Zhu}]{DCselection}
\textsc{Yan, X.}, \textsc{Shalizi, C.}, \textsc{Jensen, J.~E.},
  \textsc{Krzakala, F.}, \textsc{Moore, C.}, \textsc{Zdeborov{\'a}, L.},
  \textsc{Zhang, P.} \& \textsc{Zhu, Y.} (2014).
\newblock Model selection for degree-corrected block models.
\newblock \textit{Journal of Statistical Mechanics: Theory and Experiment}
  \textbf{2014}, P05007.

\bibitem[{Yang et~al.(2013)Yang, McAuley \& Leskovec}]{cesna}
\textsc{Yang, J.}, \textsc{McAuley, J.} \& \textsc{Leskovec, J.} (2013).
\newblock Community detection in networks with node attributes.
\newblock In \textit{2013 IEEE 13th international conference on data mining}.
  IEEE.

\bibitem[{Ying et~al.(2018)Ying, He, Chen, Eksombatchai, Hamilton \&
  Leskovec}]{ying2018graph}
\textsc{Ying, R.}, \textsc{He, R.}, \textsc{Chen, K.}, \textsc{Eksombatchai,
  P.}, \textsc{Hamilton, W.~L.} \& \textsc{Leskovec, J.} (2018).
\newblock Graph convolutional neural networks for web-scale recommender
  systems.
\newblock \textit{Proceedings of the 24th ACM SIGKDD international conference
  on knowledge discovery \& data mining} , 974--983.

\bibitem[{Zhang et~al.(2016)Zhang, Levina \& Zhu}]{jcdc}
\textsc{Zhang, Y.}, \textsc{Levina, E.} \& \textsc{Zhu, J.} (2016).
\newblock Community detection in networks with node features.
\newblock \textit{Electronic Journal of Statistics} \textbf{10}, 3153--3178.

\bibitem[{Zhao et~al.(2012)Zhao, Levina \& Zhu}]{DCzhuji}
\textsc{Zhao, Y.}, \textsc{Levina, E.} \& \textsc{Zhu, J.} (2012).
\newblock Consistency of community detection in networks under degree-corrected
  stochastic block models.
\newblock \textit{The Annals of Statistics} \textbf{40}, 2266--2292.

\end{thebibliography}

\end{document}